\newcommand{\mbf}[1]{\boldsymbol{#1}}
\newcommand{\mcl}[1]{\mathcal{#1}}
\newcommand{\ta}{\star}
\newcommand{\uar}{\xleftarrow{\$}}
\newcommand{\thsp}{\hspace*{1pt}}
\newcommand{\ts}[1]{{\textstyle #1}}
\DeclareTextFontCommand{\textmyfont}{\myfont}
\newcommand{\Gen}{\textmyfont{KeyGen}}
\newcommand{\Enc}{\textmyfont{Enc}}
\newcommand{\Dec}{\textmyfont{Dec}}
\newcommand{\Add}{\textmyfont{Add}}
\newcommand{\Mult}{\textmyfont{Mult}}
\newcommand{\sk}{\textmyfont{sk}}
\newtheorem{definition}{Definition}
\begin{document}

\title{Practical Homomorphic Encryption Over the Integers}
\author[1]{James Dyer\thanks{james.dyer@postgrad.manchester.ac.uk}}
\author[2]{Martin Dyer\thanks{m.e.dyer@leeds.ac.uk}}
\author[2]{Jie Xu\thanks{j.xu@leeds.ac.uk}}
\affil[1]{School of Computer Science, University of Manchester}
\affil[2]{School of Computing, University of Leeds}

\maketitle
\begin{abstract}
We present novel homomorphic encryption schemes for integer arithmetic, intended for use in secure single-party computation in the cloud. These schemes are capable of securely computing only low degree polynomials homomorphically, but  this appears sufficient for most practical applications. In this setting, our schemes lead to practical key and ciphertext sizes. We present a sequence of generalisations of our basic schemes, with increasing levels of security, but decreasing practicality. We have evaluated the first four of these algorithms by computing a low-degree inner product. The timings of these computations are extremely favourable. Finally, we use our ideas to derive a fully homomorphic system, which appears impractical, but can homomorphically evaluate arbitrary Boolean circuits.
\end{abstract}


\section{Introduction}

With services like Amazon's Elastic MapReduce 
and Microsoft's HDInsight 
offering large-scale distributed cloud computing environments, computation in the cloud is becoming increasingly more available. Such services allow for computation on large volumes of data to be performed without the large investment in local computing resources. However, where the data that is processed is sensitive, such as financial or medical data, then uploading such data in its raw form to such a third-party service becomes problematic.

To take advantage of these cloud services, we require a means to process the data securely on such a platform. We designate such a computation, \emph{secure computation in the cloud} (SCC). SCC should not expose input or output data to any other party, including the cloud service provider. Furthermore, the details of the computation should not allow any other party to deduce its inputs and outputs. Cryptography seems the natural approach to this problem.

However, it should be noted that van Dijk and Juels \cite{vandijk2010impossibility} show that cryptography alone cannot realise secure \emph{multi-party} computation in the cloud, where the parties jointly compute a function over their inputs while keeping their own inputs private. Since our approach is via homomorphic encryption, we will restrict our attention to what we will call \emph{secure single-party computation in the cloud} (SSCC).

\emph{Homomorphic encryption} (HE) seems to offer a solution to the SSCC problem. First defined by Rivest et al. \cite{rivest1978data} in 1978,  HE allows a function to be computed on encrypted inputs without ever decrypting the inputs. Suppose we wish to compute the function $f$ on inputs $x_1,x_2,\ldots,x_n$, then, under HE, $\Dec(f'(x_1',x_2',\ldots,x_n'))=f(x_1,x_2,\ldots,x_n)$, where $x_1',\ldots,x_n'$ are the encryptions of $x_1,\ldots,x_n$, $f'$ is the equivalent of $f$ in the ciphertext space, and $\Dec$ is the decryption function. One can easily see that HE would satisfy some of the requirements for secure computation in the cloud. A \emph{somewhat HE} scheme (SWHE) is an HE scheme which is HE for only limited imputs and functions.

\emph{Fully HE} (FHE) is an HE scheme that is homomorphic for all $f$. This was first realised by Gentry in 2009 \cite{gentry2009fully}, and appears to be the ideal HE scheme. However, despite the clear advantages of FHE, and many significant advances \cite{brakerski2011efficient,brakerski2011ring,brakerski2012leveled,brakerski2012fully}, it remains largely impractical. The two implementations of recent FHE schemes, HELib \cite{halevi2015bootstrapping} and FHEW \cite{ducas2015bootstrapping}, both perform very poorly in comparison with operations on unencrypted data, in their running time and space requirements. It is reported that a HELib implementation of the AES-128 circuit processed inputs in just over four minutes \cite{halevi2015helib}. Similarly, FHEW processed a single homomorphic NAND operation followed by a re-encryption in 0.69s and using 2.2GB of RAM. The paper~\cite{naehrig2011can} attempted to assess the practicality of one of the underlying SWHE schemes\cite{brakerski2011ring}, but with no positive conclusion.

Therefore, we take the view in this paper that only SWHE is currently of practical interest. Our goal is to develop new SWHE schemes which are practically useful, and which we have implemented, though we conclude the paper by showing that our ideas can be used to develop a (fairly impractical) FHE scheme.

\subsection{Scenario}
As introduced above, our work concerns secure single-party computation in the cloud. In our scenario, a secure client wishes to compute a function on a large volume of data. This function could be searching or sorting the data, computing an arithmetic function of numeric data, or any other operation. For the most part, we consider here the case where the client wishes to perform arithmetic computations on numeric data. This data might be the numeric fields within a record, and the non-numeric fields would be treated differently.

The client delegates the computation to the cloud. However, while the data is in the cloud, it could be subject to snooping, including by the cloud provider. The client does not wish to expose the input data, or the output of the computation, to possible snooping in the cloud. A snooper here will be a party who may observe the data and the computation in the cloud, but cannot, or does not, change the data or insert spurious data. (In our setting data modification would amount to pointless vandalism.) The snooping could be casual, simply displaying an uninvited interest, or malicious, intending to use the data for the attacker's own purposes.

To obtain the required data privacy, the client's function will be computed homomorphically, on an encrypted version of the data. The client encrypts the source data using a secret key and uploads the encrypted data to the cloud, along with a homomorphic equivalent of the target computation. The cloud environment performs the homomorphic computation on the encrypted data. The result of the homomorphic computation is then returned to the client, who decrypts it using the secret key, and obtains the output of the original computation.

In this scenario, we observe that the source data is never exposed in the cloud, but encryptions of the source data are. A snooper may observe the computation of the equivalent homomorphic function in the cloud environment. As a result, they may be able to deduce what operations are performed on the data, even though they do not know the inputs. A snooper may also be able to inspect the (encrypted) working data generated by the cloud computation, and even perform side computations of their own on this data. However, snoopers have no access to the secret key, so cannot make encryptions of their own to deduce the secret key.

\subsection{Definitions and Notation}
$x \uar S$ denotes a value $x$ chosen uniformly at random from the discrete set $S$.

$\Gen : \mathcal{S} \rightarrow \mathcal{K}$ denotes the key generation function operating on the security parameter space $\mathcal{S}$ and whose range is the secret key space $\mathcal{K}$.

$\Enc : \mathcal{M} \times \mathcal{K} \rightarrow \mathcal{C}$ denotes the symmetric encryption function operating on the plaintext space $\mathcal{M}$ and the secret key space $\mathcal{K}$ and whose range is the ciphertext space $\mathcal{C}$.

$\Dec : \mathcal{C} \times \mathcal{K} \rightarrow \mathcal{M}$ denotes the symmetric decryption function operating on the ciphertext space $\mathcal{C}$ and the secret key space $\mathcal{K}$  and whose range is the plaintext space $\mathcal{M}$.

$\Add : \mathcal{C} \times \mathcal{C} \rightarrow \mathcal{C}$ denotes the homomorphic addition function whose domain is $\mathcal{C}^2$ and whose range is $\mathcal{C}$.

$\Mult : \mathcal{C} \times \mathcal{C} \rightarrow \mathcal{C}$ denotes the homomorphic mutliplication function whose domain is $\mathcal{C}^2$ and whose range is $\mathcal{C}$.

$m,m_1,m_2,\ldots$ denote plaintext values. Similarly, $c,c_1,c_2,\ldots$ denote ciphertext values.

If $k^*=\binom{k+1}{2}$, $\mbf{v}_{\ta}=[v_1\ v_2\ \ldots\ v_{k^*}]^T$ denotes a $k^*$-vector which augments the $k$-vector $\mbf{v}=[v_1\ v_2\ \ldots\ v_k]^T$ by appending elements $v_i = f_i(v_1,\dots,v_k)$ $(i \in [k+1,k^*])$, for a linear function $f_i$. (All vectors are column vectors throughout.)

$\mbf{e}_i$ denotes the $i$th unit vector $(i=1,2,\ldots)$, with size determined by the context.

$[x,y]$ denotes the integers between $x$ and $y$ inclusive.

$[x,y)$ denotes $[x,y]\setminus\{y\}$.

$\log$ denotes $\log_e$ and $\lg$ denotes $\log_2$.

If $\lambda$ is a security parameter, ``with high probability'' will mean with probability $1-2^{-\epsilon\lambda}$, for  some constant $\epsilon>0$.

Polynomial time or space means time or space polynomial in the security parameter~$\lambda$.

An \emph{arithmetic circuit} $\Phi$ over the ring $\mathsf{R}$ in variables $X = \{x_1,\ldots,x_n\}$  is a directed acyclic graph with every vertex (\emph{gate}) having in-degree either two or zero. Every vertex of in-degree 0 is labelled either by a variable in $X$ or by an element of $\mathsf{R}$. Each other vertex in $\Phi$ has in-degree two and is labelled by either $\times$ or $+$. Every vertex of out-degree 0 in $\Phi$ computes a polynomial in $\mathsf{R}[X]$ in the obvious manner. We refer to the directed edges in the acyclic graph as \emph{wires}. The \emph{depth} of $\Phi$ is the length of the longest directed path in it. (See~\cite{hrubevs2011arithmetic}.)

A \emph{Boolean circuit} $\Phi$ is defined similarly to an arithmetic circuit.
Every vertex of in-degree 0 is labelled either by a variable in $X$ or an element of $\{0,1\}$. Each other vertex in $\Phi$ has in-degree two and is labelled by a binary Boolean function. Then every vertex of out-degree 0 in $\Phi$ computes some Boolean function of the inputs. Note that any finite computation can be represented as a Boolean circuit. (See~\cite{vollmer1999intro}.)

\subsection{Formal Model of Scenario}\label{sec:formalmodel}
We have $n$ integer inputs $m_1, m_2, \ldots, m_n$ distributed in $[0,M)$ according to a probability distribution $\mathcal{D}$. If $X$ is a random integer sampled from $\mathcal{D}$, let $\Pr[X=i]=\xi_i$, for $i\in[0,M)$. We will consider three measures of the \emph{entropy} of $X$, measured in bits:\\[0.5ex]
\begin{tabular}{lr@{}l}
  Shannon entropy & $H_1(X)$ = &\ $-\sum_{i=0}^{M-1}\xi_i \lg \xi_i$,\\[0.5ex]
 Collision entropy & $H_2(X)$ = &\ $-\lg \big(\sum_{i=0}^{M-1}\xi_i^2\big)$,\\[0.5ex]
  Min entropy & $H_\infty(X)$ = &\ $-\lg \big(\max_{i=0}^{M-1}\xi_i\big)$.
\end{tabular}\\[0.5ex]
It is known that $H_1(X)\geq H_2(X)\geq H_\infty(X)$, with equality if and only if $X$ has the uniform distribution on $[0,M)$, in which case all three are $\lg M$. We will denote $H_\infty(X)$ by $\rho$, so it also follows that $H_1(X),H_2(X)\geq\rho$. We use the term ``entropy'' without qualification to mean min entropy, $H_\infty(X)$. Note that  $H_\infty(X)=\rho\geq\lg M$ implies $\xi_i\leq 2^{-\rho}$, $i\in[0,M)$, and that $M\geq 2^\rho$.

We wish to compute a polynomial $P$ of degree $d$ on these inputs. A secure client $A$ selects an instance $\mathcal{E}_K$ of the encryption algorithm $\mathcal{E}$ using the secret parameter set $K$. $A$ encrypts the $n$ inputs by computing $c_i = \mathcal{E}_K(m_i)$, for $i \in [1,n]$. $A$ uploads $c_1, c_2, \ldots, c_n$ and $P'$ to the cloud computing environment, where $P'$ is the homomorphic equivalent of $P$ in the ciphertext space. The cloud computing environment computes $P'(c_1, c_2, \ldots, c_n)$. $A$ retrieves $P'(c_1, c_2, \ldots, c_n)$ from the cloud, and computes \[P(m_1, m_2, \ldots, m_n) = {\mathcal{E}_K}^{-1}(P'(c_1, c_2, \ldots, c_n)).\]

A snooper is only able to inspect $c_1, c_2, \ldots, c_n$, the function $P'$, the computation of $P'(c_1, c_2, \ldots, c_n)$, including subcomputations and working data, and $P'(c_1, c_2, \ldots, c_n)$ itself.

Our encryption schemes are essentially symmetric key encryption, though there is no key distribution problem. The public parameters of our schemes are exposed to the cloud, but they do not provide an encryption oracle.

This model is clearly susceptible to certain attacks. We consider ciphertext only, brute force, and cryptanalytic attacks. To avoid cryptanalytic attacks, we must choose the parameters of the system carefully. Here, a brute force attack will mean guessing the plaintext associated with a ciphertext. In our encryption schemes, it will be true that a guess can be verified. Since $\xi_i\leq 2^{-\rho}$ for $i\in[0,M)$, the expected number $\mu$ of guesses  before making a correct guess satisfies $\mu\geq2^{\rho}$. Massey~\cite{massey1994guessing} gave a corresponding result in terms of the Shannon entropy $H_1(X)$.

It follows similarly that the probability of any correct guess in $2^{\rho/2}$ guesses is at most $2^{-\rho/2}$. This bound holds if we need only to guess  any one of $n$ inputs, $m_1,m_2,\ldots,m_n$, even if these inputs are not independent. Therefore, if $\rho$ is large enough, a brute force  attack is infeasible.

Recall that, in our model, known plaintext attack (KPA) is possible only by brute force, and not through being given a sample of plaintext, ciphertext pairs.

We do not regard chosen plaintext attack (CPA) or chosen ciphertext attack (CCA) as being relevant to our model. Since $\mathcal{E}_K$ is never exposed in the cloud, there is no realistic analogue of an encryption or decryption oracle, as required by these attacks. Of course, in public key encryption, an encryption algorithm is publicly available as part of the system, so CPA must be forestalled. We note that, following \cite{bellare1997concrete}, it is common in studying symmetric key encryption to suppose that defence to CPA or CCA is necessary. While this may provide a stronger notion of security, it seems hard to justify. Both~\cite{bellare2005intro} and~\cite{boneh2015graduate}  provide examples which are intended to justify this convention. However, these examples are unconvincing, and seem to have little practical importance.  Nevertheless, since it is not difficult to do so, we show that the ``N'' variants of our HE schemes below resist CPA.

We note that observation of the function $P'$, which closely resembles $P$, might leak some information about its inputs. However, we assume that this information is far too weak to threaten the security of the system. This assumption seems universal in the existing literature on HE. However, ``garbled circuits''~\cite{bellare2012yao,goldreich1987play} are a possible solution to this problem, if the threat is significant.

Finally, we note that our model of SSCC is very similar to the model of \emph{private single-client computing}, presented in \cite{vandijk2010impossibility} along with an example application.
\subsection{Our Results}
We describe novel practical HE schemes for the encryption of integers, to be employed in a SSCC system inspired by CryptDB \cite{popa2011cryptdb}. CryptDB is an HE scheme where  encryption depends on the operation to be performed. CryptDB encrypts integers using the Paillier cryptosystem \cite{paillier1999} which allows for homomorphic addition. Similar systems (\cite{tetali2013mrcrypt,stephen2014practical}) use Paillier and ElGamal \cite{elgamal1985} to support addition and multiplication, respectively. The ``unpadded'' versions of these schemes are used, which may not be secure under CPA~\cite{goldwasser1984prob}, reducing any possible advantage of a public-key system. However, these schemes do not support both addition and multiplication. To perform an inner product, say, requires re-encrypting the data once the multiplications have been performed so that the additions can then be performed. In a SSCC system, this would require shipping the data back to the initiator for re-encryption, creating a significant overhead. To avoid this problem, we aim for an HE scheme for integers supporting both addition and multiplication.

Our HE scheme over the integers is inspired by the SWHE scheme of van Dijk et al.~\cite{vandijk2010fully} (which we denote DGHVS) that is used as the basis for their public-key system (denoted as DGHV in \cite{coron2011fully}). As in their system, we add multiples of integers to the plaintext to produce a ciphertext. However, DGHVS supports only arithmetic mod~2, and we generalise this to larger moduli.

In the section above, we showed that the input data must have sufficient entropy to negate brute force attacks. If the data lacks sufficient entropy, we will introduce more entropy in two ways. The first is to add random ``noise'' of sufficient entropy to the ciphertext to ``mask'' the plaintext. This approach is employed in DGHV and in sections \ref{sec:he1n} and \ref{sec:he2n}. In our schemes we add a random multiple (from 0 to $\kappa$) of a large integer, $\kappa$, to the ciphertext, such that $m_i<\kappa$, for all $i\in[1,N]$. If the entropy of the original data was $\rho$, once transformed it is $\rho + \lg \kappa$. Therefore, if $\kappa$ is large enough we can ensure that our data has sufficient entropy. However, there is a downside. To prevent the noise term growing so large that the cipherext can no longer be decrypted successfully, we are restricted to computing polynomials of low enough degree.

The other technique will be to increase the dimension of the ciphertext. We represent the ciphertext as a $k$-vector, where each element is a linear function of the plaintext as in DGHVS. Addition and multiplication of ciphertexts are simple transformations of the ciphertexts using vector and matrix algebra. The basic case $k=1$ is described in section \ref{sec:he1}. Then we can increase the entropy $k$-fold by creating a $k$-vector ciphertext. This is because we need to guess $k$ plaintexts to successfully break the system. Assuming that the inputs $m_1,m_2,\ldots,m_n$ are chosen independently from $\mathcal{D}$, and the entropy of inputs is $\rho$, then the entropy of a $k$-tuple $(m_1,m_2,\ldots,m_k)$ is $k\rho$. Thus the $k$-vectors effectively have entropy $k \rho$. If $k$ is chosen large enough, there will be sufficient entropy to prevent brute force attack. Note that the assumption of independence among $m_1,m_2,\ldots,m_n$ can easily be relaxed, to allow some correlation, but we will not discuss the details here. The upside is that some cryptanalytic methods applicable in the case $k=1$ do not seem to generalise even to $k=2$. The downside is that ciphertexts are $k$ times larger, and each homomorphic multiplication requires $\Omega(k^3)$ time and space, in comparison with the case $k=1$. For very large $k$, this probably renders the methods impractical. Therefore, we consider the case $k=2$ in some detail in section~\ref{sec:he2}, before describing the general case in section~\ref{sec:hek}.

Our work here only aims to support integer arithmetic. Other operations, like sorting, require different HE schemes, which we will consider elsewhere. In the integer arithmetic case, a system for computing low-degree polynomials seems to suffice for most practical applications. (See~\cite{naehrig2011can}).  To this end, we will consider practically implementable values for the parameters of the cryptosystems.

\subsection{Related Work}
FHE schemes start by devising a SWHE scheme which supports only homomorphic addition and multiplication, so the computation is to evaluate an arithmetic circuit on encryptions to which random ``noise'' has been edded. A general computation is  represented as an arithmetised Boolean circuit~\cite{babai1991arithmetization}. As ciphertexts are added and multiplied during the computation, the ``noise'' grows until there comes a point where the plaintext cannot be uniquely recovered from the ciphertext.  Therefore the arithmetic circuit must be of limited depth, to prevent this ``noise'' growing too large. If the circuit is sufficiently shallow, this SWHE scheme can be used in its own right, e.g. Boneh et al. \cite{boneh2013pdq}.

The SWHE system is transformed into an FHE scheme by a process of re-encryption where, once the noise grows too large, the ciphertext is re-encrypted, thereby allowing computation to proceed. This re-encryption is performed homomorphically, using a circuit shallow enough that the noise does not grow too large. This circuit necessarily contains information about the private key, which must be suitably hidden. This re-encryption reduces the noise in the ciphertext, and allows circuits of arbitrary depth to be computed homomorphically.

Our scheme is inspired by that of van Dijk et al. \cite{vandijk2010fully}. In their paper they produce an FHE scheme over the integers, where a simple SWHE scheme is ``bootstrapped'' to FHE. van Dijk et al. take a simple symmetric scheme where an plaintext bit $m$ is encrypted as $c = m+2r+pq$, where  the secret key $p$ is an odd $\eta$-bit integer from the interval $[2^{\eta-1},2^\eta)$, and $r$ and $q$ are integers chosen randomly from an interval such that $2r < p/2$. The ciphertext $c$ is decrypted by the calculation $(c \bmod p) \bmod 2$. Our scheme HE1N below may be regarded as a generalisation of this.

van Dijk et al. transform their symmetric scheme into a public key scheme. A public key $\langle x_0,x_1,\ldots,x_\tau \rangle$ is constructed where each $x_i$ is a near multiple of $p$ of the form $pq+r$ where $q$ and $r$ are random integers chosen from a prescribed interval. To encrypt a message a subset $S$ of $x_i$ from the public key are chosen and the ciphertext is calculated as $c=m+2r+2\sum_{i \in S} x_i \mod x_0$. The ciphertext is decrypted as previously described. We could extend our HE$k$N schemes here to a public key variant, using a similar device. However, we will not do so, since public key systems appear to have little application to our model.

van Dijk et al. ``bootstrap'' their public key system  to an FHE scheme, using Gentry's approach~\cite{gentry2009fully}. In this case, the bootstrapping is done by homomorphically simulating a division by $p$, thus obtaining an encryption of $c \bmod p$ which can be used to continue the computation. Our FHE proposal below is based on different principles.

In \cite{coron2011fully}, Coron et al. reduce the size of the public key by using a similar but alternative encryption scheme. In this scheme, $p$ is a prime in the specified interval, $x_0$ is an exact multiple of $p$ and the sum term in the ciphertext is quadratic rather than linear. 

The above FHE schemes 
represent a major theoretical achievement. However, they appear impractical for computations on large data sets, in terms of both running time and storage requirements.

Therefore, the direction of our work is similar to~\cite{naehrig2011can}. The authors implement the SWHE scheme from~\cite{brakerski2011ring}. However, they give results only for degree two polynomials. Our schemes seem capable of computing somewhat higher degree polynomials for practical key and ciphertext sizes.

Recent work on \emph{functional encryption} \cite{goldwasser2013how,goldwasser2013reusable} should also be noted. 
While these results are of great theoretical interest, the scenario where such schemes might be applied is rather different from our model. Also, the methods of~\cite{goldwasser2013how,goldwasser2013reusable} do not seem likely to be of practical interest in the foreseeable future.

The symmetric MORE scheme \cite{kipnis2012efficient} and its derivative \cite{xiao2012efficient} uses linear transformations, as does our scheme HE$k$ in a different way. These systems have been shown~\cite{vizar2015} to be insecure against KPA, at least as originally proposed. However, whether KPA is practically relevant in context is moot.

We also note the work of Cheon et al. \cite{cheon2015crt}. They use the Chinese Remainder Theorem (CRT) in an FHE system. We make use of the CRT in our scheme HE2NCRT below (section \ref{sec:he2ncrt}). However, our construction differs significantly from theirs.

We should note that the encryption of the Boolean circuits in our fully homomorphic system (section \ref{sec:fhe}) has similarities to Yao's ``garbled circuits'' \cite{bellare2012yao,goldreich1987play}.

\subsection{Roadmap}
We present our initial homomorphic scheme in section \ref{sec:inithom} in two variants, HE1 and HE1N. HE1 (section \ref{sec:he1}) is suitable for integers distributed with sufficient entropy. HE1N (section \ref{sec:he1n}) deals with integers not distributed with sufficient entropy, by adding an additional ``noise'' term.

Section \ref{sec:dimension} describes a further two variants, HE2 and HE2N, which increase the entropy of the plaintext by adding a dimension to the ciphertexts, which are 2-vectors. Again, HE2 (section \ref{sec:he2}) deals with integers of sufficient entropy, HE2N (section \ref{sec:he2n}) with integers without the required entropy. We describe this in some detail, since it appears to be practically useful, and is the simplest version of our general scheme.

In section \ref{sec:genk}, we generalise HE2 and HE2N from 2-vectors to $k$-vectors, for arbitrary $k$, in the scheme HE$k$, with noisy variant HE$k$N. These schemes may also be practical for small enough $k$.

In section \ref{sec:he2ncrt}, we present an extension of HE2N, HE2NCRT, which uses the CRT to distribute the computation.

In section \ref{sec:fhe}, we discuss how HE$k$ can be transformed into an FHE scheme for large enough $k$, though the resulting scheme seems only to be of theoretical interest.

In section~\ref{sec:results} we describe extensive experimentation with the schemes, and finally, in section \ref{sec:concfurther}, we give our conclusions.

\section{Initial Homomorphic Scheme}
\label{sec:inithom}
In this section we present details of our initial SWHE schemes over the integers.

\subsection{Sufficient Entropy (HE1)}
\label{sec:he1}
We have integer inputs $m_1, m_2, \ldots, m_n \in [0,M)$. (Negative integers can be handled as in van Dijk et al.~\cite{vandijk2010fully}, by taking residues in $[-(p-1)/2,(p-1)/2)$, rather than $[0,p)$.) We wish to compute a polynomial $P$ of degree $d$ in these inputs. The inputs are distributed with entropy $\rho$, where $\rho$ is large enough, as discussed in section~\ref{sec:formalmodel} above. Our HE scheme is the system $(\Gen,\Enc,\Dec,\Add,\Mult)$.

Let $\lambda$ be a large enough security parameter, measured in bits. Let $p$ and $q$ be suitably large distinct primes such that $p\in[2^{\lambda-1},2^\lambda]$, and $q\in[2^{\eta-1},2^\eta]$, where $\eta\approx\lambda^2/\rho - \lambda$. Here $\lambda$ must be large enough to negate direct factorisation of $pq$ (see~\cite{kleinjung2010factor}), and the relative values of $p$ and $q$ are chosen to negate Coppersmith's attack \cite{coppersmith1997small}.  We will also require $p > (n+1)^dM^d$ to ensure that $P(m_1,m_2,\ldots,m_n) < p$, so that the result of the computation can be successfully decrypted. (In many applications, a smaller value of p may suffice). Our function $\Gen$ will randomly select $p$ and $q$ according to these bounds. Then $p$ is the private symmetric key for the system and $pq$ is the modulus for arithmetic performed by $\Add$ and $\Mult$. $pq$ is a public parameter of the system. We assume that the entropy $\rho\gg\lg\lambda$, so that a brute force attack cannot be carried out in polynomial time.

We can easily set the parameters to practical values. If $n\approx\sqrt{M}$, $M\approx2^\rho$ then we may take $\lambda \approx 3d\rho/2$ and $\eta\approx 3d\lambda/2 - \lambda$ (see appendix~\ref{app:bounds}). For, example, if $\rho=32$, $d=4,$ we can take any $\lambda > 192$, $\eta > 960$.

We encrypt a plaintext integer $m$ as
\begin{align*}
	\Enc(m,p) &= m + r p\ \bmod\ pq
\end{align*}
where $r\uar[1,q)$.

We decrypt the ciphertext $c$ by
\begin{align*}
	\Dec(c,p) &= c\ \bmod\ p
\end{align*}

The sum modulo $pq$ of two ciphertexts, $c = m + rp$ and $c' = m' + r'p$, is
\begin{align*}
	 \Add(c,c')= c+c'\ \bmod\ pq\ =\ m+m' + (r+r')p.
\end{align*}
This decrypts to $m+m'$, provided $m+m'<p$.

The product modulo $pq$ of two ciphertexts, $c = m + rp$ and $c' = m' + r'p$, is
\begin{align*}
	 \Mult(c,c')&= cc' \mod{pq}\\ &= mm' + (rm'+r'm+rr'p)p,
\end{align*}
which decrypts to $mm'$, provided $mm'<p$.

Security of the system is provided by the \emph{partial approximate common divisor problem} (PACDP), first posed by Howgrave-Graham~\cite{howgrave2001approx}, but can be formulated \cite{chen2012faster,cohn2012approx} as:

\begin{definition} \textit{(Partial approximate common divisor problem.)} Suppose we are given one input $x_0=pr_0$ and $n$ inputs $x_i=pr_i + m_i$, $i\in[1,n]$. We have a bound $B$ such that $|m_i| < B$ for all $i$. Under what conditions on the variables, $r_i$ and $m_i$, and the bound $B$, can an algorithm be found that can uniquely determine $p$ in a time which is polynomial in the total bit length of the numbers involved?
\end{definition}

A straightforward attack on this problem is by brute force. Consider $x_1$. Assuming that $m_1$ is sampled from $\mathcal{D}$, having entropy $\rho$, we successively try values for $m_1$ and compute $\gcd(x_0,x_1-m_1)$ in polynomial time until we find a divisor that is large enough to recover $p$. Then we can recover $m_i$ as $(x_i \bmod p)$ for $i\in[2,n]$. As discussed in section~\ref{sec:formalmodel}, the search will requires $2^{\rho}$ $\gcd$ operations in expectation.

Several attempts have been made to solve the PACDP~\cite{howgrave2001approx,cohn2012approx,chen2012faster}, resulting in theoretically faster algorithms for some cases of the problem. However, our parameters for $p$ and $q$ are chosen to negate the attacks of~\cite{howgrave2001approx,cohn2012approx}. The paper~\cite{chen2012faster} gives an algorithm requiring only $\sqrt{M}$ polynomial time operations in the special case that $\mathcal{D}$ is the uniform distribution on $[0,M)$,
and hence $\rho=\lg M$. No algorithm running in time subexponential in $\rho$  is known for this problem in the worst case. Therefore, if $\rho$ is large enough, the encryption should be secure.

In actuality, our system is a special case of PACDP because we use the residues of the approximate prime multiples modulo a distinct semiprime modulus. A semiprime is a natural number that is the product of two prime numbers. A distinct semiprime is a semiprime where the prime factors are distinct. We denote this case of PACDP as the \emph{semiprime partial approximate common divisor problem} (SPACDP). Although it is a restriction, there is no reason to believe that this is any easier than PACDP.

\begin{definition} \textit{(Semiprime factorisation problem.)} Given a semiprime $s$, the product of primes $p$ and $q$, can $p$ and $q$ be determined  in polynomial time?
\end{definition}

The computational complexity of this problem, which lies at the heart of the widely-used RSA cryptosystem,  is open, other than for quantum computing, which currently remains impractical. We will show that breaking HE1 is equivalent to semiprime factorisation. Therefore, our scheme is at least as secure as unpadded RSA~\cite{rivest1978method}.
\begin{restatable}{theorem}{factorise}
\label{thm:1}
	An attack against HE1 is successful in polynomial time if and only if we can factorise a distinct semi-prime in polynomial time.
\end{restatable}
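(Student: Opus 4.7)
The theorem asserts a polynomial-time equivalence between breaking HE1 and factoring distinct semiprimes, so the plan is to prove both implications separately. The easy direction is ``factoring $\Rightarrow$ break'': given a polynomial-time factoring algorithm $F$, apply $F$ to the public modulus $pq$, identify $p$ by its prescribed size range $[2^{\lambda-1},2^\lambda]$ (which is clearly distinguishable from that of $q$, since $\eta \approx \lambda^2/\rho - \lambda \gg \lambda$), and then decrypt any ciphertext as $c \bmod p$. This breaks HE1 in polynomial time.

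The substantive direction is ``break $\Rightarrow$ factoring''. Suppose we have an attacker $A$ which, in polynomial time, recovers the plaintext $m$ from $(pq,c)$ where $c = m + rp$ is a ciphertext. Given an arbitrary distinct semiprime $s$ to factor (with factors in the HE1 size range), the plan is to treat $s$ as the public modulus and invoke $A$ to produce a plaintext. The key algebraic observation is that if $A$ returns the correct plaintext $m$, then $c - m = rp$ for some $r \in [1,q)$, and since $q$ is prime with $r < q$, $\gcd(c-m,s) = p\cdot\gcd(r,q) = p$. A single successful call to $A$ therefore yields a nontrivial factor of $s$ by one $\gcd$ computation, and we recover $q$ as $s/p$.

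The main obstacle is producing an input to $A$ that the reduction can actually construct. Since $p$ is unknown, we cannot fabricate a genuine ciphertext $c = m + rp$ with controlled $m$ and $r$. The natural workaround is to sample $c$ uniformly from $[0,s)$: then $c$ decomposes uniquely as $m' + r'p$ with $m' = c \bmod p \in [0,p)$ and $r' \in [0,q)$, and the bad event $r' = 0$ occurs only with negligible probability $1/q$. On such an input, a correct attacker returns $m'$, and then $\gcd(c-m',s) = p$ as above. The delicate point, where extra argument is needed, is that $A$ is designed to succeed on ciphertexts whose implicit plaintext is drawn from the distribution $\mathcal{D}$ on $[0,M)$, whereas the reduction presents $A$ with a $c$ whose implicit plaintext $m'$ is uniform on $[0,p)$. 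I would therefore either argue that the notion of ``attack'' should be robust to the plaintext distribution (so that success on worst-case $c$ is implicit in the definition), or reformulate the hard direction around a key-recovery attacker which outputs $p$ directly from $pq$; in the latter case the reduction becomes trivial. Either formulation yields a polynomial-time factoring algorithm from a polynomial-time HE1 attack, completing the equivalence.
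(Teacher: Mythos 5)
Your proposal is correct and follows essentially the same route as the paper's proof: the forward direction factors $pq$ and decrypts via $c \bmod p$, and the reverse direction recovers $p$ as $\gcd(c-m,pq)$ using $0<r<q$. The distributional subtlety you raise (how the reduction fabricates a valid ciphertext for the attacker) is a point the paper's own terse proof glosses over by simply assuming the attacker determines $m$ from $c$ for arbitrary $m$; your extra care here is a genuine, if minor, strengthening rather than a different approach.
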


There is a variant of brute force attack on this system, which we will call a \emph{collision attack}. Suppose we have a pair of equal plaintexts $m_1=m_2$. Then the difference between their encryptions $(c_1-c_2)$ is an encryption of $0$, and the scheme is subject to KPA. In fact, if we have $n$ plaintexts $m_1,m_2,\ldots,m_n$, and there exist $i,j\in[1,n]$ with $m_i=m_j$, the product $\Pi_{1\leq i<j\leq n}(c_j-c_i)$ is an encryption of $0$. However, if there is sufficient entropy, this attack is not possible.\vspace{-2ex}
\begin{restatable}{lemma}{collision}
\label{lem:1}
If the inputs $m$ have entropy $\rho$ then,  for any two independent inputs $m_1,m_2$, $\Pr(m_1=m_2)\leq 2^{-\rho}$.	
\end{restatable}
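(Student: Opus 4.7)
The plan is to unpack the collision probability $\Pr(m_1=m_2)$ directly in terms of the probability mass function $\xi_i = \Pr[X=i]$ of $\mathcal{D}$, and then apply the min-entropy bound $\xi_i \leq 2^{-\rho}$ that is noted in Section~\ref{sec:formalmodel} (following from the definition $H_\infty(X) = -\lg(\max_i \xi_i) = \rho$).

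Concretely, since $m_1$ and $m_2$ are independent samples from $\mathcal{D}$, I would first write
\[
\Pr(m_1=m_2) \;=\; \sum_{i=0}^{M-1} \Pr(m_1=i)\Pr(m_2=i) \;=\; \sum_{i=0}^{M-1} \xi_i^2.
\]
Then I would bound one factor pointwise by the maximum, $\xi_i \leq \max_j \xi_j \leq 2^{-\rho}$, and observe that the remaining factor sums to $1$:
\[
\sum_{i=0}^{M-1} \xi_i^2 \;\leq\; \bigl(\max_j \xi_j\bigr)\sum_{i=0}^{M-1}\xi_i \;\leq\; 2^{-\rho}.
\]
This is exactly the stated bound. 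Equivalently, one can recognise $\sum_i \xi_i^2 = 2^{-H_2(X)}$ as the collision probability, and invoke the inequality $H_2(X) \geq H_\infty(X) = \rho$ already recorded in the preamble, but the elementary derivation above needs no separate appeal to that ordering.

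There is no real obstacle; the statement is a one-line consequence of the definition of min-entropy plus independence. The only thing worth being careful about is that the bound uses min-entropy (where $\xi_i \leq 2^{-\rho}$ holds pointwise), not Shannon entropy, so the proof does not require any convexity argument or Jensen-style estimate — it is a direct pointwise bound followed by summing probabilities to~$1$.
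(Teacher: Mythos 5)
Your proof is correct and essentially matches the paper's: both reduce $\Pr(m_1=m_2)$ to $\sum_i \xi_i^2$ via independence and then bound it by $2^{-\rho}$. The paper simply cites the inequality $H_2(X)\geq H_\infty(X)=\rho$ recorded in Section~\ref{sec:formalmodel} rather than re-deriving it, whereas your pointwise bound $\xi_i\leq 2^{-\rho}$ followed by summing to $1$ is just an inline proof of that same inequality, as you yourself note.
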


Thus, if we have $n$ inputs, $m_1,m_2,\ldots,m_n$ the probability that there exist $i,j\in[1,n]$ with $m_i=m_j$ is at most $\binom{n}{2} 2^{-\rho}$.	If $n<2^{-\rho/3}$, this probability is at most $2^{-\rho/3}$, smaller than any inverse polynomial in $\lambda$. Hence, for large enough $\lambda$, collision attack is infeasible.

A similar collision attack can be made against the schemes described below. We will not discuss the details, since they are almost identical to those above.

\subsection{Insufficient Entropy (HE1N)}
\label{sec:he1n}
Suppose now that the integer inputs $m_i, i\in[1,n],$ are distributed with entropy $\rho$, where $\rho$ is not large enough to negate a brute force guessing attack. Therefore, we increase the entropy of the plaintext by adding an additional ``noise'' term to the ciphertext. This will be a multiple $s$ (from 0 to $\kappa$) of an integer $\kappa$, chosen so that the entropy $\rho'=\rho + \lg \kappa$ is large enough to negate a brute force guessing attack. We also require $\kappa > (n+1)^d M^d$, so that $P(m_1,m_2,\ldots,m_n) < \kappa$. As a result of the extra linear term in the ciphertext, we compute $P(m_1,\ldots,m_n,\kappa)$ instead. We can easily retrieve $P(m_1,\ldots,m_n)$ from $P(m_1,\ldots,m_n,\kappa)$. $\Gen$ now chooses $p$ and $q$ as in HE1, but with $\eta = \lambda^2/\rho' - \lambda,$ and $p>(n+1)^d (M+\kappa^2)^d$ so that
\[ P(m_1+s_1\kappa,m_2+s_2\kappa,\ldots,m_N+s_n\kappa) < p,\]  when $s_1,s_2,\ldots,s_n\in[0,\kappa)$. The secret key, \sk, is now $(\kappa,p)$.

We can set these parameters to practical values. If we assume $M \approx 2^\rho$ and large enough $n$, as in section \ref{sec:he1}, then we may take $\lg{\kappa} > d (\lg{n} + \rho)$, $\rho'=\rho+\lg\kappa$, $\lambda > d( \lg{n} + 2 \lg{\kappa})$. Then, for example, if $d=3$, $\lg{n}=16$, $\rho=8$, then $\lg{\kappa}>72$,  $\rho'=80$, $\lambda > 480$, $\eta > 2400$. In the extreme case that the inputs are bits, so $\rho=1$,  and $d=3$,  $\lg{n}=16$, then we can take $\lg{\kappa} \approx 51$ and  $\rho'\approx52$, and we have $\lambda > 354$, $\eta > 2056$, which is only 15\% smaller than for $\rho=8$.

We encrypt a plaintext, $m$, as
\[ \Enc(m,\textrm{\sk}) = m+ s\kappa+rp  \mod{pq},\]
where $r\uar[1,q)$ and $s \uar [0,\kappa)$. We decrypt a ciphertext, $c$, as
\[ \Dec(c,\textrm{\sk}) = (c \bmod p) \bmod \kappa. \]
Addition and multiplication of ciphertexts is as above.

The use of random noise gives the encryption the following ``indistinguishability'' property, which implies that the system satisfies IND-CPA \cite{bellare1998relations,bellare1997concrete}.
\begin{restatable}{theorem}{ind}
\label{thm:2}
For any encryption $c$, $c\bmod\kappa$ is polynomial time indistinguishable from the uniform distribution on $[0,\kappa)$. Thus HE1 satisfies IND-CPA, under the assuption that SPACDP is not polynomial time solvable.\vspace{-2ex}
\end{restatable}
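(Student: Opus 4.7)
The theorem has two parts, and I would prove them separately. For the first, showing that $c \bmod \kappa$ is polynomial-time indistinguishable from uniform on $[0,\kappa)$, the argument is in fact purely statistical. The parameter choice $p > (n+1)^d(M+\kappa^2)^d$ implies $p > \kappa^2$, whence $m + s\kappa + rp \leq \kappa^2 - 1 + (q-1)p < pq$, so the reduction modulo $pq$ is vacuous and $c = m + s\kappa + rp$ exactly. Reducing modulo $\kappa$ annihilates the $s\kappa$ term, giving $c \bmod \kappa = (m+rp)\bmod\kappa$. Since $p$ is prime with $p > \kappa$, we have $\gcd(p,\kappa)=1$, so $r \mapsto rp \bmod \kappa$ permutes $\mathbb{Z}_\kappa$. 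For $r$ uniform on $[1,q)$, a direct counting argument bounds the statistical distance of $rp \bmod \kappa$ from uniform on $\mathbb{Z}_\kappa$ by $O(\kappa/q)$; since $q = 2^{\Theta(\eta)}$ and $\kappa$ is polynomial in $\lambda$, this is negligible.

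For the IND-CPA half, my plan is a two-step hybrid: show that for any fixed $m \in [0,\kappa)$ the distribution of $\Enc(m,\textrm{\sk})$ is computationally indistinguishable from uniform on $\mathbb{Z}_{pq}$, and then apply the triangle inequality to obtain $\Enc(m_0) \approx \textrm{Unif}(\mathbb{Z}_{pq}) \approx \Enc(m_1)$. Given a PPT distinguisher between $\Enc(m)$ and uniform on $\mathbb{Z}_{pq}$, the reduction would use SPACDP samples $x_i = r_i p + m_i$ (or uniform challenges, in the decisional version) together with freshly chosen $s\kappa$ offsets to form candidate ciphertexts: if the challenge is a genuine near-multiple of $p$, the simulation produces a distribution identical to a real encryption; if uniform, the output is uniform. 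Standard self-randomisation (multiplying challenges by random $r'$) and majority-vote amplification then let one recover $p$ from $pq$.

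The first part of the theorem plays a substantive role in this reduction: the $s\kappa$ mask is exactly what makes the simulated ciphertext's residue modulo $\kappa$ statistically correct, ensuring the simulation really does match a genuine encryption. Without this mask, the plaintext would be trivially exposed in the low bits and the reduction would collapse. More broadly, the first half rules out any polynomial-time information leak from the low-order bits of the ciphertext, so only the near-multiple-of-$p$ structure remains to be hidden, which is precisely what SPACDP controls.

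The main obstacle is the passage from search SPACDP (recovering $p$) to the decisional form actually needed for IND-CPA. Search-to-decision reductions for approximate-GCD problems are known but parameter-sensitive and technically delicate; alternatively one takes a decisional SPACDP as an auxiliary assumption and argues it is no easier than its search version, as is standard in DGHV-style security proofs.
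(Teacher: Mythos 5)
Your first half coincides with the paper's own argument: both reduce to the distribution of $(m+rp)\bmod\kappa$, use $\gcd(p,\kappa)=1$, and bound the total variation distance from uniform by roughly $\kappa/q$. (One small slip: $\kappa$ is not polynomial in $\lambda$ --- $\lg\kappa$ is; the distance is negligible because $\lg\kappa\ll\eta$, not because $\kappa$ is small.)

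For the IND-CPA half you take a genuinely different route, and it has a gap you yourself identify but do not close. Your hybrid $\Enc(m_0)\approx\mathrm{Unif}(\mathbb{Z}_{pq})\approx\Enc(m_1)$ needs the \emph{decisional} statement that near-multiples of $p$ (plus a $\kappa$-mask) are indistinguishable from uniform; the theorem's hypothesis is only that the \emph{search} problem SPACDP is hard. Deferring to ``known but delicate'' search-to-decision reductions, or to an auxiliary decisional assumption, means the proof as written does not establish the theorem under its stated hypothesis. The paper avoids this entirely by leaning on the first half: it re-parses each ciphertext as $c_i=\mu_i'+s_i'\kappa$ with $\mu_i'=\mu_i+r_ip\bmod\kappa$ and $s_i'=s_i+\lfloor(\mu_i+r_ip)/\kappa\rfloor$, i.e.\ as an approximate-common-divisor instance with divisor $\kappa$ whose offsets $\mu_i'$ are, by the first part, statistically indistinguishable from independent uniform samples on $[0,\kappa)$ \emph{regardless of the underlying plaintexts}. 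Hence swapping $m_1$ and $m_2$ perturbs the joint ciphertext distribution only negligibly, and the SPACDP assumption is invoked only to rule out recovering $p$ (which would let an adversary strip the $r_ip$ term and read off $m_i$). In short, the paper makes the message-hiding step statistical rather than computational, which is exactly what sidesteps the search-versus-decision issue; your observation that the $s\kappa$ mask is what makes simulation work is the right intuition, but you route it through a reduction that needs a stronger assumption than the one available. To repair your version you would either have to carry out the search-to-decision reduction for these parameters or restructure the second half along the paper's statistical lines.
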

\section{Adding a dimension}
\label{sec:dimension}
In this section we discuss adding an additional dimension to the ciphertext, which becomes a 2-vector. In both schemes presented below, HE2 and HE2N, we add a further vector term, with two further secret parameters. The two schemes presented below have a constant factor overhead for arithmetic operations. An addition operation in the plaintext space requires two additions in the ciphertext space, and a multiplication in the plaintext space requires nine multiplications and four additions in the ciphertext space.
\subsection{Sufficient entropy (HE2)}
\label{sec:he2}
As with HE1, it is assumed that the inputs $m_i\ (i\in[1,n])$ are of sufficient entropy. $p$ and $q$ are chosen by $\Gen$ according to the bounds given in section~\ref{sec:he1}. $\Gen$ also sets $\mbf{a}= [a_1\  a_2]^T$, where $a_i\uar [1,pq)$ $(i\in [1,2])$ such that $a_1, a_2, a_1-a_2\neq 0$ ($\bmod\ p$ and $\bmod\ q$). The secret key \sk\ is $(p,\mbf{a})$ and the public parameters are $pq$ and $R$. $R$ is the re-encryption matrix, which is detailed below.

The condition $a_1, a_2, a_1-a_2\neq 0$, $(\bmod~p$,~$\bmod~q)$ fails with exponentially small probability $3(1/p+1/q)$. Thus, $a_1$ and $a_2$ are indistinguishable in polynomial time from $a_1,a_2\uar[0,pq)$.
\subsubsection*{Encryption}
We encrypt a plaintext integer $m$ as the 2-vector $\mbf{c}$,
\begin{align*}
	\mbf{c} =\Enc(m,\textrm{\sk})= (m+rp)\mbf{1} + s\mbf{a} \mod {pq} ,
\end{align*}
where $\mbf{1}= [1\  \,1]^T$, $r\uar [0,q)$, and $s\uar [0,pq)$. We construct $\mbf{c}_{\ta}$, where $c_3=f(c_1,c_2)$ for a given linear function $f$. We will use $f(c_1,c_2)=2c_1-c_2$, though we only require $f(c_1,c_2)\neq c_1,c_2$. Therefore, $c_3= (m+rp) + sa_3 \mod pq$, for $a_3=2a_1-a_2$.

\begin{restatable}{theorem}{hetworandom}
\label{thm:3}
The encryption scheme produces ciphertexts with components which are random integers modulo $pq$.	
\end{restatable}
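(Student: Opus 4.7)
The plan is to show that, for each $i\in\{1,2\}$, the component
$c_i = (m+rp) + s a_i \bmod pq$
is uniformly distributed on $[0,pq)$, where the randomness is taken over $r \uar [0,q)$ and $s \uar [0,pq)$. The key tool is that $a_i$ is a unit modulo $pq$, so that the map $s\mapsto s a_i \bmod pq$ is a bijection of the residue set onto itself.

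First I would verify that $\gcd(a_i, pq)=1$. The generator $\Gen$ imposes $a_i\not\equiv 0 \pmod{p}$ and $a_i\not\equiv 0 \pmod{q}$, and since $p$ and $q$ are the only primes dividing the semiprime $pq$, this immediately gives $\gcd(a_i,pq)=1$. Hence multiplication by $a_i$ permutes $\mathbb{Z}/pq\mathbb{Z}$, so if $s$ is uniform on $[0,pq)$ then $s a_i \bmod pq$ is also uniform on $[0,pq)$. Translation by the fixed residue $(m+rp) \bmod pq$ is likewise a bijection of the residue set, hence $c_i \bmod pq$ is uniform on $[0,pq)$. Applying the same argument to $i=1$ and $i=2$ in turn finishes the proof.

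I anticipate no serious obstacle: the argument is essentially one line once invertibility of $a_i$ is observed, and the extra randomness supplied by $r$ is not even needed for marginal uniformity. The one subtlety worth flagging is that the theorem asserts only \emph{marginal} uniformity of each component, not joint uniformity of the pair $(c_1,c_2)$, which fails because both components share the summand $m+rp$ and the single random $s$; indeed $c_1-c_2 \equiv s(a_1-a_2) \pmod{pq}$ shows clear dependence. The extra $\Gen$ condition $a_1-a_2 \not\equiv 0 \pmod{p}$ and $\pmod{q}$ (which ensures $a_1-a_2$ is itself a unit modulo $pq$) is used elsewhere in the scheme rather than for the present claim.
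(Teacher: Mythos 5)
Your proof is correct and follows essentially the same route as the paper: both arguments rest on $a_i$ being a unit modulo $pq$ (guaranteed by the $\Gen$ conditions), so that conditional on $r$ the map $s \mapsto (m+rp)+sa_i \bmod pq$ is a bijection and each component is uniform on $[0,pq)$. Your closing remark that the components are only marginally, not jointly, uniform matches the caveat the paper states immediately after the theorem.
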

Note, however, that the components of the ciphertexts are correlated, and this is a vulnerability. We discuss this later in this section (``Cryptanalysis'').

\subsubsection*{Decryption}
To decrypt, we eliminate $s$ from $\mbf{c}$ (modulo $p$), giving
\begin{align*}
	\Dec(\mbf{c},\textrm{\sk})= \mbf{\gamma}^T \mbf{c} \mod p,
\end{align*}
where $\mbf{\gamma}^T= (a_{2}-a_{1})^{-1}[a_2~\,-a_1]$. We call $\gamma$ the \emph{decryption vector}.

\subsubsection*{Addition}
We define the addition operation on ciphertexts as the vector sum modulo $pq$ of the two ciphertext vectors $\mbf{c}$ and $\mbf{c'}$,
\begin{align*}
	\Add(\mbf{c},\mbf{c'})=\mbf{c}+\mbf{c'} \mod{pq}.
\end{align*}

Therefore, if inputs $m,m'$ encrypt as $(m+rp)\mbf{1}+s\mbf{a}$, $(m'+r'p+)\mbf{1}+s'\mbf{a}$, the sum is:
\begin{align*}
	\mbf{c}+\mbf{c'}=(m+m'+(r+r')p)\mbf{1}+(s+s')\mbf{a}.
\end{align*}
which is a valid encryption of $m+m'$.

\subsubsection*{Multiplication}

Consider the Hadamard product modulo $pq$, $\mbf{c}_{\ta} \circ \mbf{c}_{\ta}'$, of the two augmented ciphertext vectors $\mbf{c_{\ta}}$ and $\mbf{c_{\ta}}'$:
\begin{align*}
\mbf{z}_{\ta}=\mbf{c}_{\ta} \circ \mbf{c}_{\ta}' = \begin{bmatrix}
 	c_1 c_1' \\ c_2 c_2'\\c_3c_3' \end{bmatrix} \mod {pq}
\end{align*}

Therefore, if inputs $m,m'$ are encrypted as $(m+rp)\mbf{1}+s\mbf{a}$, $(m'+r'p)\mbf{1}+s'\mbf{a}$, we first calculate
\begin{align*}
\mbf{z}_{\ta} &= (m+rp)(m'+r'p)\mbf{1}_{\ta}+[(m+rp)s'+(m'+r'p)s]\mbf{a}_{\ta}\\
&+ss'\mbf{a}_{\ta}^{\circ2}
=(mm'+r_1p)\mbf{1}_{\ta}+s_1\mbf{a}_{\ta}+ss'\mbf{a}_{\ta}^{\circ2}\ \ \mod {pq},
\end{align*}
where $r_1=mr'+m'r+rr'p$, $s_1=(m+rp)s'+(m'+r'p)s$, and $\mbf{a}_{\ta}^{\circ2}=[a_1^2\ \ a_2^2\ \ a_3^2]^T$.

As we can see, $\mbf{z}_{\ta}$ is not a valid encryption of $mm'$. We need to re-encrypt this product to eliminate the $\mbf{a}_{\ta}^{\circ2}$ term.

We achieve this by multiplying $\mbf{z}_{\ta}$ by $R$, a $2\times3$ matrix, \[\left[\begin{array}{c@{\quad}c@{\quad}c}
         1-2\alpha_1 &        \alpha_{1} & \alpha_1 \\
         -2\alpha_2 &        \alpha_{2}+1 & \alpha_2
\end{array}\right],\]
where $\alpha_1$ and $\alpha_2$ are parameters to be decided.

It is easy to check that $R\mbf1_{\ta}=\mbf1$ and $R\mbf a_{\ta}=\mbf a$, independently of $a_1,a_2$. Now
\begin{align*}
(R\mbf a_{\ta}^{\circ2})_1&=(1-2\alpha_1)a_1^2+\alpha_1a_2^2+\alpha_1(2a_1-a_2)^2\\
&=a_1^2+\alpha_1((2a_1-a_2)^2+a_2^2-2a_1^2)\\
&=a_1^2+2\alpha_1(a_2-a_1)^2 \\
(R\mbf a_{\ta}^{\circ2})_2&=-2\alpha_2a_1^2+(\alpha_2+1)a_2^2+\alpha_2(2a_1-a_2)^2\\
&=a_2^2+\alpha_2((2a_1-a_2)^2+a_2^2-2a_1^2)\\
&=a_2^2+2\alpha_2(a_2-a_1)^2
\end{align*}
Let $\beta=2(a_2-a_1)^2$. Thus, $\beta^{-1}\mod {pq}$ exists. Therefore, if we set
\begin{equation}\label{eq:160}
\alpha_1= \beta^{-1}(\sigma a_1+\varrho p-a_1^2),\qquad\alpha_2= \beta^{-1}(\sigma a_2+\varrho p-a_2^2),
\end{equation}
where $\varrho \uar [0,q]$ and $\sigma \uar [0,pq)$, then we obtain the identity
\[R\mbf a_{\ta}^{\circ2} = \varrho p \mbf{1} + \sigma \mbf{a}.\]

Observe that $\alpha_1,\alpha_2$ are public, but give only two equations for the four parameters of the system $a_1,a_2,\sigma,\varrho p$. These equations are quadratic $\bmod\ pq$, and solving them is as hard as semiprime factorisation in the worst case~\cite{rabin1979dsp}.

We re-encrypt by applying $R$ to $\mbf{z}_{\ta}$, i.e. $\mbf{z}' = R \mbf{z}_{\ta}$, so
\begin{align*}
\mbf{z}'&= (mm'+r_1p)R\mbf{1}+s_1R\mbf{a}+ss'R\mbf{a}^{\circ2}\\
	&= (mm'+r_1p)\mbf{1}+s_1\mbf{a}+ss'(\sigma\mbf{a}+\varrho p\mbf{1})\\
&=(mm'+r_2p)\mbf{1}+(s_1+\sigma rr')\mbf{a}\\
&=(mm'+r_2p)\mbf{1}+s_2\mbf{a}\quad\pmod {pq}
\end{align*}
for some integers $r_2,s_2$. So $\mbf{z}'$ is a valid encryption of $mm'$.

Therefore, the homomorphic equivalent of a multiplication operation is defined as
\begin{align*}
	\Mult(\mbf{c},\mbf{c}') = \mbf{c} \cdot \mbf{c}' = R(\mbf{c_{\ta}}\circ\mbf{c'_{\ta}})\pmod{pq},
\end{align*}
where $\cdot$ is a product on $\mathbb{Z}_{pq}^2$ and $\mbf{c_{\ta}}\circ\mbf{c'_{\ta}}$ is the Hadamard product modulo $pq$ of the two extended ciphertext vectors $\mbf{c_{\ta}}$ and $\mbf{c'_{\ta}}$. Thus, the public parameters of the system are the modulus $pq$ and the re-encryption matrix $R$, i.e. $(pq,R)$.

Observe that, independently of $\mbf a$,
\[R\mbf{c_{\ta}}=(m+rp)R\mbf{1_{\ta}}+sR\mbf{a_{\ta}}=(m+rp)\mbf{1}+s\mbf{a}=\mbf{c},\]
for any ciphertext $\mbf{c}$. Hence re-encrypting a ciphertext gives the identity operation, and discloses no information.

\subsubsection*{Hardness}
We can show that this system is at least as hard as SPACDP. In fact,
\begin{restatable}{theorem}{hardness}
\label{thm:4}
SPACDP is of equivalent complexity to the special case of HE2 where $\delta =a_2-a_1$ ($0<\delta <q$) is known.
\end{restatable}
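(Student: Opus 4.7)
The plan is to establish the equivalence by constructing polynomial-time reductions in both directions, using the structural observation that, with $\delta = a_2 - a_1$ known, the two components $c_1, c_2$ of an HE2 ciphertext are linked by the identity $c_2 - c_1 \equiv s\delta \pmod{pq}$, so the random mask $s$ can be recovered explicitly and only the unknown $a_1$ stands between HE2 and an HE1-style near-multiple-of-$p$ representation.

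For \emph{SPACDP reducing to breaking HE2 with known $\delta$}: given an SPACDP instance $(x_0, x_1, \ldots, x_n)$ with $x_0 = pq$ and $x_i = pr_i + m_i$, I simulate an HE2 instance as follows. Set the public modulus to $x_0$, pick $a_1 \uar [1, pq)$ (coprime to $pq$ with overwhelming probability) and let $a_2 = a_1 + \delta$. Produce ciphertexts $\mbf{c}_i = x_i \mbf{1} + s_i \mbf{a} \bmod pq$ with independent $s_i \uar [0, pq)$; these are honestly distributed HE2 encryptions of $m_i$. For the public re-encryption matrix $R$ pick $\sigma \uar [0, pq)$, set $\varrho = 0$, and compute $\alpha_1, \alpha_2$ from~(\ref{eq:160}); since $\sigma a_j \bmod pq$ is uniform, the $\alpha_j$ are uniform and $R$ is indistinguishable from a real one. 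Handing $(pq, R, \delta)$ and $\{\mbf{c}_i\}$ to the HE2-with-known-$\delta$ attacker yields $p$, which solves the SPACDP instance.

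For \emph{breaking HE2 with known $\delta$ reducing to SPACDP}: given ciphertexts $\mbf{c}_i$ and $\delta$, first compute $s_i = \delta^{-1}(c_i^{(2)} - c_i^{(1)}) \bmod pq$, which is well-defined since $\gcd(\delta, pq) = 1$. Each first component satisfies $c_i^{(1)} \equiv m_i + r_i p + s_i a_1 \pmod{pq}$, with only $a_1$ unknown. To eliminate the $a_1$ contribution, apply LLL to the lattice $L = \{\mbf{\lambda} \in \mathbb{Z}^n : \sum_i \lambda_i s_i \equiv 0 \pmod{pq}\}$ and obtain a basis of short vectors. For each such $\mbf{\lambda}$, the integer $y_{\mbf{\lambda}} = \bigl(\sum_i \lambda_i c_i^{(1)}\bigr) \bmod pq$ satisfies $y_{\mbf{\lambda}} \equiv \bigl(\sum_i \lambda_i m_i\bigr) + p\bigl(\sum_i \lambda_i r_i\bigr) \pmod{pq}$, i.e.\ is a near-multiple of $p$ whose residue modulo $p$ is bounded by $\|\mbf{\lambda}\|_1 M$. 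Collecting enough such $y_{\mbf{\lambda}}$ together with $x_0 := pq$ forms a valid SPACDP instance, which the oracle solves to return $p$.

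The main obstacle is the second direction: one must verify that LLL produces $\mbf{\lambda}$ short enough that the induced SPACDP noise is polynomial in the security parameter. Standard LLL analysis on the lattice $L$, whose determinant is $pq$, yields $\|\mbf{\lambda}\|_\infty$ polynomial in $\log pq$ provided the number of available ciphertexts is $n = \Omega(\log pq)$, which keeps the noise $\|\mbf{\lambda}\|_1 M$ polynomial and within the regime where SPACDP is presumed hard (and where any useful SPACDP oracle must succeed). A minor subsidiary concern in the first direction is computational indistinguishability of the simulated $R$ from a real $R$, but this follows immediately from the uniformity of $\sigma a_j \bmod pq$.
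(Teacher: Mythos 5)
Your first direction is essentially the paper's: you turn the SPACDP inputs $x_i=m_i+r_ip$ into HE2 ciphertexts by adding $s_i\mbf{a}$ with $\mbf{a}=[a_1\ a_1+\delta]^T$, exactly as in the paper's construction. One caveat: the paper has an oracle that knows $p$ publish a genuine $R$, whereas you simulate $R$ yourself with $\varrho=0$. The real $\alpha_j$ of \eqref{eq:160} differ from your simulated ones by $\beta^{-1}\varrho p$ with $\varrho$ random, i.e.\ by a uniformly random element of the subgroup of multiples of $p$; the marginal uniformity of $\sigma a_j$ does not by itself show the joint distribution of $(\alpha_1,\alpha_2)$ is indistinguishable, so this step needs more care than ``follows immediately.''

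The second direction is where your route genuinely diverges from the paper's, and it has two gaps. First, the LLL step does not deliver what you claim. The lattice $L$ has determinant (generically) $pq$ and dimension $n$, so LLL returns a vector of length roughly $2^{\Theta(n)}(pq)^{1/n}$; taking $n=\Theta(\lg pq)$ as you propose gives length on the order of $\sqrt{pq}$, and optimising over $n$ still leaves $2^{\Theta(\sqrt{\lg pq})}$, which is subexponential but not polynomial. Hence the offsets $\|\mbf{\lambda}\|_1M$ of your constructed instance are superpolynomial, outside the regime in which the SPACDP oracle is assumed to work, so the whole purpose of the LLL detour (controlling the offset bound $B$) is defeated. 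Second, and more fundamentally, even granting that the oracle returns $p$, you have not broken HE2: a ciphertext component reduced mod $p$ is $m_i+s_ia_1 \bmod p$ with $s_i$ known but $a_1$ unknown, so the plaintexts remain masked and the decryption vector $\mbf{\gamma}$ cannot be formed. This is precisely why the paper's proof contains a further stage that you omit: it feeds the quantities $m_i+\delta^{-1}(c_{i2}-c_{i1})a+r_ip$ to the SPACDP oracle to strip the $r_ip$ terms, then recovers $a$ from the public matrix $R$ by a second SPACDP application to $2\delta^2\alpha_1=m_0+\varrho p$ followed by solving a quadratic mod $p$ (Berlekamp). Without recovering $a_1$, your reduction establishes only that $p$ can be found, not that HE2 (with known $\delta$) can be broken, so the claimed equivalence is not proved.
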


Observe that, without knowing the parameter $k=a_2-a_1$, HE2 cannot be reduced to SPACDP in this way. Thus HE2 is seemingly more secure than HE1.

\subsubsection*{Cryptanalysis}
Each new ciphertext $\mbf{c}$ introduces two new unknowns $r,s$ and two equations for $c_1,c_2$. Thus we gain no additional information from a new ciphertext. However, if we can guess, $m,\,m'$ for any two ciphertexts $\mbf{c},\mbf{c}'$, we can determine
\begin{align*}
(c_1-m)=rp+sa_1&,\qquad(c_2-m)=rp+sa_2,\\(c'_1-m')=r'p+s'a_1&,\qquad(c'_2-m')=r'p+s'a_2,
\end{align*}
so
\begin{align*}
(c_1-m)&(c'_2-m')-(c_2-m)(c'_1-m')\\=\,&(a_2-a_1)(rs'-r's)p \pmod{pq}
\end{align*}
Since $a_2\neq a_1$, and $sr'\neq s'r$ with high probability, this is a nonzero multiple of $p$, $\nu p$ say. We may assume $\nu<q$, so $p=\gcd(\nu p, pq)$.
We can now solve the linear system $\gamma^T[\mbf{c}\ \,\mbf{c}']=[m\ \,m']\mod{p}$ to recover the decryption vector. This effectively breaks the system, since we can now decrypt an arbitrary ciphertext. We could proceed further, and
attempt to infer $a_1$ and $a_2$, but we will not do so.

Note that to break this system, we need to guess two plaintexts, as opposed to one in HE1. The entropy of a pair $(m,m')$ is $2\rho$, so we have effectively squared the number of guesses needed to break the system relative to HE1. So HE2 can tolerate somewhat smaller entropy than HE1. We note further that HE2 does not seem immediately vulnerable to other attacks on HE1~\cite{howgrave2001approx,cohn2012approx,chen2012faster}.

\subsection{Insufficient entropy (HE2N)}
\label{sec:he2n}
In this section we extend HE1N above (section \ref{sec:he1n}) to two dimensions. $\Gen$ chooses $p,q$ and $\kappa$ according to the bounds given in section \ref{sec:he1n} and $\mbf{1}$, $\mbf{a}$ are as in section \ref{sec:he2}. The secret key is $(\kappa,p,\mbf{a})$, and the public parameters are $pq$ and $R$, as defined in section \ref{sec:he2}.

We encrypt a plaintext integer $m \in [0,M)$ as a 2-vector~$\mbf{c}$,
\begin{align*}
	 \Enc(m,\textrm{\sk}) = \mbf{c} =(m+rp+s\kappa)\mbf{1} + t\mbf{a} \mod {pq},
\end{align*}
where $r$ is as in section~\ref{sec:he2}, $s\uar[0,\kappa)$, and $t\uar [0,pq)$.

We decrypt a ciphertext $\mbf{c}$ by
\begin{align*}
	\Dec(\mbf{c},\textrm{\sk})=(\mbf{\gamma}^T \mbf{c} \mod p)\mod \kappa,
\end{align*}
where $\mbf{\gamma}^T$ is defined as in \ref{sec:he2}.

Addition and multiplication of ciphertexts are defined as in section \ref{sec:he2}.

Finally, we note that HE2N satisfies Theorem~\ref{thm:2}.
\section{Generalisation to \lowercase{\textit{\large k}} dimensions}
\label{sec:genk}
In this section we generalise HE2 and HE2N to $k$-vectors. HE1 and HE1N are the cases for $k=1$ and HE2 and HE2N are the cases for $k=2$.
\subsection{Sufficient entropy (\textbf{\large HE$k$})}
\label{sec:hek}
We now generalise HE2 to $k$ dimensions. $\Gen$, randomly chooses $p$ and $q$ according to the bounds given in section \ref{sec:he2}.  $\Gen$ sets $\mbf{a}_j \uar [1,pq)^k$, $\forall j\in[1,k-1]$. The secret key, {\sk, is $(p,\mbf{a}_1,\ldots,\mbf{a}_{k-1})$, and the public parameters are $pq$ and $R$. Again, $R$ is detailed below.

With regard to computational overhead, the number of arithmetic operations per plaintext multiplication is $O(k^3)$, and the space requirement per ciphertext is $O(k)$, by comparison with HE1.

\subsubsection*{Encryption}
A plaintext, $m \in [0,M]$, is enciphered as
\[\Enc(m,\textrm{\sk})=\mbf{c}= (m+rp)\mbf{1}+\sum_{j=1}^{k-1}s_j\mbf{a}_j \mod{pq} \] where $\mbf{c}$ is a $k$-vector, $r\uar [0,q)$, and $\forall j, s_j \uar [0,pq)$. Let $\mbf{a}_0=\mbf{1}$, and $A_k=[\mbf{a}_0\ \mbf{a}_1\ \ldots\ \mbf{a}_{k-1}]$. We wish the columns of $A_k$ to form a basis for $\mathbb{Z}^k_{pq}$. We will show that they do so with high probability. In the unlikely event that they do not, we generate new vectors until they do.
\begin{restatable}{lemma}{lemten}
\label{lem:10}
$\Pr(\mbf{a}_0,\mbf{a}_1,\ldots,\mbf{a}_{k-1}$ do not form a basis$)\leq(k-1)(1/p+1/q)$.
\end{restatable}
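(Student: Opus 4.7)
The plan is to reduce the question to each prime factor via the Chinese Remainder Theorem and then apply a column-by-column union bound.

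First I would note that $\mbf{a}_0,\mbf{a}_1,\ldots,\mbf{a}_{k-1}$ form a basis of $\mathbb{Z}_{pq}^k$ iff $\det(A_k)$ is a unit mod $pq$, and by CRT this is equivalent to $\det(A_k)\not\equiv 0\pmod p$ \emph{and} $\det(A_k)\not\equiv 0\pmod q$. The failure event therefore decomposes as a union of two events, giving
\[
\Pr[\text{not a basis}] \leq \Pr[\det(A_k)\equiv 0\pmod p] + \Pr[\det(A_k)\equiv 0\pmod q].
\]

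Next, I would bound each term separately. Fix a prime $\ell\in\{p,q\}$. Since $\mathbb{F}_\ell$ is a field, singularity of $A_k$ mod $\ell$ is equivalent to some column $\mbf{a}_j$ lying in the $\mathbb{F}_\ell$-span of the earlier columns. I would argue inductively: $\mbf{a}_0=\mbf{1}$ is nonzero mod $\ell$; conditional on $\mbf{a}_0,\ldots,\mbf{a}_{j-1}$ being independent mod $\ell$, their span $V_j\subseteq\mathbb{F}_\ell^k$ has exactly $\ell^j$ elements. Because $\mbf{a}_j$ reduces to an (essentially) uniform vector in $\mathbb{F}_\ell^k$, the probability that it lands in $V_j$ is at most $\ell^j/\ell^k\leq 1/\ell$, using $j\leq k-1$. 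Letting $E_j$ be the event that $\mbf{a}_j$ is the first column to introduce dependence mod $\ell$, we have $\Pr[E_j]\leq 1/\ell$, and a union bound over $j=1,\ldots,k-1$ yields $\Pr[\det(A_k)\equiv 0\pmod\ell]\leq(k-1)/\ell$. Summing the two prime contributions gives the claimed bound $(k-1)(1/p+1/q)$.

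The only subtle point, and the main thing to handle carefully, is that $\mbf{a}_j$ is sampled from $[1,pq)^k$ rather than $[0,pq)^k$, so each coordinate reduces mod $\ell$ to a distribution that is almost but not exactly uniform, with each residue having probability at most $q/(pq-1)$ when $\ell=p$ (and symmetrically for $\ell=q$). This deviation is of order $1/(pq)$ per coordinate and can be absorbed without affecting the stated precision; equivalently, one may simply replace $[1,pq)$ with $[0,pq)$ in the sampling, which has no cryptographic consequence. Beyond this small bookkeeping issue, the argument is a routine CRT-plus-union-bound computation.
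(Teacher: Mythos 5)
Your proposal is correct, but it proves the lemma by a genuinely different route from the paper. Both arguments begin identically, using the CRT to reduce invertibility of $A_k$ over $\mathbb{Z}_{pq}$ to nonsingularity modulo $p$ and modulo $q$ separately. From there the paper treats $\det A_k$ as a polynomial of total degree $k-1$ in the entries $a_{ij}$, checks it is not identically zero (by exhibiting the choice $\mbf{a}_i=\mbf{e}_{i+1}$), and invokes the Schwartz--Zippel Lemma to get $\Pr(\det A_k=0 \bmod \ell)\leq (k-1)/\ell$ for each prime $\ell$. You instead condition column by column: given that $\mbf{a}_0,\ldots,\mbf{a}_{j-1}$ are independent over $\mathbb{F}_\ell$, their span has $\ell^j$ elements, so a fresh (near-)uniform column lands in it with probability at most $\ell^{j-k}\leq 1/\ell$, and a union bound over $j$ gives the same $(k-1)/\ell$. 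Your argument is more elementary (no appeal to SZL) and, had you not weakened $\ell^{j-k}$ to $1/\ell$, would yield the sharper geometric bound $\sum_{j=1}^{k-1}\ell^{j-k}<1/(\ell-1)$. The paper's polynomial-identity approach is weaker here but is the one that scales to Theorem~\ref{thm:10}, where the columns of $A^{\circ 2}_{\star k}$ are quadratic in the $a_{ij}$ and are not independent uniform vectors, so a span-counting argument would not apply directly. Your remark about sampling from $[1,pq)^k$ rather than $[0,pq)^k$ is a fair point of care --- the paper's own proof silently assumes $[0,pq)$ --- and your treatment of it is adequate for the purpose the lemma serves.
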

We extend our definition of an augmented vector, $\mbf{v}_\star$, for a $k$-vector, $\mbf{v}$, such that $\mbf{v}_\star$ is a $\binom{k+1}{2}$-vector, with components $v_i$ ($1\leq i \leq k$) followed by $2v_i-v_j$ ($1\leq i<j\leq k$). 
In general, for $\ell>k$, $v_\ell=2v_i-v_j$, where $\ell=\binom{i}{2}+k+j-1$. Note that $\mbf{v}_\star=U_k\mbf{v}$ for a $\binom{k+1}{2}\times k$ matrix with entries $0,\pm 1,2$, and whose first $k$ rows form the $k\times k$ identity matrix $I_k$. 
Note that $\mbf{v}_\star=U_k\mbf{v}$ implies that $\mbf{1}_\star$ is the $\binom{k+1}{2}$ vector of 1's, and that $*$ is a linear mapping, i.e. $(r_1\mbf{v}_1+r_2\mbf{v}_2)_\star=r_1\mbf{v}_{1*}+r_2\mbf{v}_{2*}$.
\subsubsection*{Decryption}
\[\Dec(\mbf{c},\textrm{\sk})= \mbf{\gamma}^T\thsp\mbf{c}\mod p.\]
where $\mbf{\gamma}^T=(A_k^{-1})_1$ is the first row of $A_k^{-1}$. We call $\mbf{\gamma}$ the \emph{decryption vector}, as in HE2.
\subsubsection*{Addition}
Addition of ciphertexts is the vector sum of the ciphertext vectors as with HE2.

\subsubsection*{Multiplication}
Consider the Hadamard product of two augmented ciphertext vectors, $\mbf{c}_\star\circ\mbf{c}'_\star$. For notational brevity, let $\tilde{m}=m+rp$.
\begin{align*}
\mbf{c}_\star\circ\mbf{c}'_\star\ &=\ \big(\tilde{m}\mbf{1}_\star+\sum_{j=1}^{k-1}s_j\mbf{a}_{\star j}\big)\circ\big(\tilde{m}'\mbf{1}_\star+\sum_{j=1}^{k-1}s'_j\mbf{a}_{\star j}\big)\\
 &=\ \tilde{m}\tilde{m}'\mbf{1}_\star+\sum_{j=1}^{k-1}(\tilde{m}s'_j+\tilde{m}'s_j)\mbf{a}_{\star j}
 +\sum_{j=1}^{k-1}s_js'_j\mbf{a}_{\star j}\circ\mbf{a}_{\star j}\\ &+
\sum_{1\leq i< j\leq k-1}(s_is'_j+s_i's_j)\mbf{a}_{\star i}\circ\mbf{a}_{\star j},
\end{align*}
since $\mbf{1}_\star\circ\mbf{v}_\star=\mbf{v}_\star$ for any $\mbf{v}$. There are $\binom{k}{2}$ product vectors, which we must eliminate using the re-encryption matrix, $R$.

The re-encryption matrix, $R$, is $k\times\binom{k+1}{2}$. We require that $R\mbf{v}_\star=\mbf{v}$, for all $\mbf{v}$.
\begin{restatable}{lemma}{lemtwenty}
\label{lem:20}
Let $A_{\star k}=[\mbf{a}_{\star 0}\ \mbf{a}_{\star 1}\ \ldots\ \mbf{a}_{\star ,k-1}]$, where the columns of $A_k$ form a basis for $\mathbb{Z}^k_{pq}$. If $RA_{\star k}=A_k$, then $R\mbf{v}_\star=\mbf{v}$ for all $\mbf{v}\in\mathbb{Z}^k_{pq}$.	
\end{restatable}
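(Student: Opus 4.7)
The plan is to exploit two facts that are already in hand: the columns of $A_k$ form a basis of $\mathbb{Z}^k_{pq}$ (by hypothesis), and the augmentation map $*$ sending $\mbf{v}\mapsto\mbf{v}_\star$ is linear (noted immediately before the lemma, as a consequence of $\mbf{v}_\star=U_k\mbf{v}$). Together these will let me reduce the identity $R\mbf{v}_\star=\mbf{v}$ from arbitrary $\mbf{v}$ to the basis vectors $\mbf{a}_j$, and for the basis vectors it is exactly the column-wise reading of the matrix equation $RA_{\star k}=A_k$.

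Concretely, I would first read $RA_{\star k}=A_k$ column-by-column: the $j$-th column of $A_{\star k}$ is $\mbf{a}_{\star j}$ and the $j$-th column of $A_k$ is $\mbf{a}_j$, so the hypothesis is equivalent to
\[
R\mbf{a}_{\star j}=\mbf{a}_j\qquad(j=0,1,\ldots,k-1).
\]
Next, given any $\mbf{v}\in\mathbb{Z}^k_{pq}$, use that $\mbf{a}_0,\mbf{a}_1,\ldots,\mbf{a}_{k-1}$ is a basis to write $\mbf{v}=\sum_{j=0}^{k-1}x_j\mbf{a}_j$ for some $x_j\in\mathbb{Z}_{pq}$. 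Apply linearity of $*$ (which gives $(r_1\mbf{v}_1+r_2\mbf{v}_2)_\star=r_1\mbf{v}_{1\star}+r_2\mbf{v}_{2\star}$) to obtain $\mbf{v}_\star=\sum_{j=0}^{k-1}x_j\mbf{a}_{\star j}$. Then
\[
R\mbf{v}_\star=\sum_{j=0}^{k-1}x_jR\mbf{a}_{\star j}=\sum_{j=0}^{k-1}x_j\mbf{a}_j=\mbf{v},
\]
as required.

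An equivalent (and arguably slicker) route is the ``matrix'' version: since $\mbf{a}_{\star j}=U_k\mbf{a}_j$, we have $A_{\star k}=U_kA_k$, so $RA_{\star k}=A_k$ becomes $RU_kA_k=A_k$; the basis assumption makes $A_k$ invertible mod $pq$, giving $RU_k=I_k$, and hence $R\mbf{v}_\star=RU_k\mbf{v}=\mbf{v}$ for every $\mbf{v}$.

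There is really no hard step here: the only subtlety is that $A_k$ must be invertible modulo $pq$ for the reduction to work, and this is guaranteed precisely by the ``columns of $A_k$ form a basis for $\mathbb{Z}^k_{pq}$'' hypothesis, which Lemma~\ref{lem:10} shows to hold with high probability. Everything else is a direct consequence of linearity of the $*$ map together with the column-wise interpretation of the matrix identity $RA_{\star k}=A_k$.
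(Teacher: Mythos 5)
Your proof is correct and essentially identical to the paper's: the paper writes $\mbf{v}=A_k\mbf{r}$ and computes $R\mbf{v}_\star=RU_k\mbf{v}=RU_kA_k\mbf{r}=RA_{\star k}\mbf{r}=A_k\mbf{r}=\mbf{v}$, which is just the matrix form of your basis-decomposition argument, and your ``slicker'' variant ($RU_k=I_k$ after postmultiplying by $A_k^{-1}$) is exactly the remark the paper makes immediately after the lemma. No gaps.
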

The condition $RA_{\star k}=A_k$ can be written more simply, since it is $RU_kA_k=A_k$. Postmultiplying by $A_k^{-1}$ gives
$RU_k=I_k$. 
%

Since $RA_{\star k}=A_k$, we have
\begin{align*}
R(\mbf{c}_\star\circ\mbf{c}'_\star)\
 &=\ (mm'+\hat{r}p)\mbf{1}+\sum_{j=1}^{k-1}\hat{s}_j\mbf{a}_{j}\\ &\hspace*{1cm}+
\sum_{1\leq i\leq j\leq k-1}\hat{s}_{ij}R(\mbf{a}_{\star i}\circ\mbf{a}_{\star j}),
\end{align*}
where $\hat{r}$, $\hat{s}_j$  and $\hat{s}_{ij}$ ($1\leq i<j\leq k-1$) are some integers.

There are $k(\binom{k+1}{2}-k)=k\binom{k}{2}$ undetermined parameters $R_{i\ell}$, $1\leq i\leq k$, $k < \ell \leq \binom{k+1}{2}$. We now determine these by setting
\begin{equation}\label{equ:10}
 R(\mbf{a}_{\star i}\circ\mbf{a}_{\star j})\ =\ \varrho_{ij}p\thsp\mbf{1}+\sum_{l=1}^{k-1}\sigma_{ijl}\thsp\mbf{a}_l
\end{equation}
Thus we have $k\binom{k}{2}$ new unknowns, the $\varrho$'s and $\sigma$'s, and $k\binom{k}{2}$ linear equations for the $k\binom{k}{2}$ unassigned $R_{i\ell}$'s.
Let  $A^{\circ2}_{\star k}$ be the $\binom{k+1}{2}\times\binom{k+1}{2}$ matrix with columns $\mbf{a}_{\star i}\circ\mbf{a}_{\star j}$ ($0\leq i< j < k$), and let $C_k$ be the $k\times\binom{k}{2}$ matrix with columns $\varrho_{ij}p\thsp\mbf{1}+\sum_{l=1}^{k-1}\sigma_{ijl}\thsp\mbf{a}_l$ ($0<i<j<k$). Then the equations for the $R_{i\ell}$ can be written as
\begin{equation}\label{equ:20}
RA^{\circ2}_{\star k}\ =\ \left[A_k \mid C_k\right].
\end{equation}
giving $k\binom{k+1}{2}$ linear equations for the $k\binom{k+1}{2}$ $R_{i\ell}$'s in terms of quadratic functions of the  $k(k-1)$ $a_{ij}$'s ($1\leq i\leq k, 1\leq j\leq k-1$), which are undetermined. Thus the system has $k(k-1)$ parameters that cannot be deduced from $R$.

The system of equations~\eqref{equ:20} has a solution provided that $A^{\circ2}_{\star k}$ has an inverse $\bmod\ pq$. We prove that this is true with high probability. Again, in the unlikely event that this is not true, we generate new vectors $\mbf{a}_1,\ldots,\mbf{a}_{k-1}$ until it is.
\begin{restatable}{theorem}{thmten}
\label{thm:10}
$A^{\circ2}_{\star k}\mbox{ has no inverse\,} \bmod{pq}$ with probability at most $(k^2-1)(1/p+1/q)$.	
\end{restatable}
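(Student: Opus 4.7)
The plan is to apply the Schwartz--Zippel lemma to $D := \det(A^{\circ 2}_{\star k})$, viewed as a polynomial in the $k(k-1)$ entries of $\mbf{a}_1, \ldots, \mbf{a}_{k-1}$ (recall $\mbf{a}_0 = \mbf{1}$ is fixed). By CRT, $A^{\circ 2}_{\star k}$ is invertible mod $pq$ iff it is invertible both mod $p$ and mod $q$, so a union bound reduces the task to proving that each failure probability is at most $(k^2-1)/p$ and $(k^2-1)/q$ respectively.

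To bound $\deg(D)$, observe that each column of $A^{\circ 2}_{\star k}$, indexed by a pair $(i,j)$ with $0 \leq i \leq j < k$, is the Hadamard product $\mbf{a}_{\star i} \circ \mbf{a}_{\star j} = (U_k \mbf{a}_i) \circ (U_k \mbf{a}_j)$. Its entries therefore have degree $0$, $1$, or $2$ in the variables $\{a_{ij}\}$, according as $i = j = 0$, $i = 0 < j$, or $0 < i \leq j$. Summing column degrees in the permutation expansion of the determinant gives $\deg(D) \leq 0 + (k-1) \cdot 1 + \binom{k}{2} \cdot 2 = k^2 - 1$, exactly matching the desired bound.

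The main obstacle is showing that $D$ is not identically zero over $\mathbb{F}_p$. I would pursue the symmetric-tensor factorisation: the linear map $\Phi : \mathrm{Sym}^2(\mathbb{F}_p^k) \to \mathbb{F}_p^{\binom{k+1}{2}}$ defined on simple tensors by $\Phi(\mbf{u} \otimes_s \mbf{v}) = (U_k \mbf{u}) \circ (U_k \mbf{v})$ has a fixed matrix $M_k$ depending only on $k$ (not on the $\mbf{a}_j$), and one obtains the factorisation $D = \det(M_k) \cdot \det(A_k)^{k+1}$. The factor $\det(A_k)$ is not identically zero by the reasoning already underlying Lemma~\ref{lem:10}, and direct calculation for small $k$ gives $\det(M_2) = \pm 2$ and $\det(M_3) = \pm 8$, suggesting $\det(M_k)$ is a small power of $2$. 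The hard part is proving $\det(M_k) \ne 0$ for all $k \geq 2$; my plan is induction on $k$ exploiting the block structure of $U_k$ (whose top $k$ rows form $I_k$ and whose remaining $\binom{k}{2}$ rows encode the prescribed combinations $2\mbf{e}_i - \mbf{e}_j$), cofactor-expanding to relate $\det(M_{k+1})$ to $\det(M_k)$ times an easily controlled factor. Since $p \geq 2^{\lambda - 1}$ dwarfs any $k$-dependent constant, once this is done $\det(M_k)$ is automatically a unit mod $p$.

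With $D$ not the zero polynomial and $\deg(D) \leq k^2 - 1$, Schwartz--Zippel yields $\Pr[D \equiv 0 \bmod p] \leq (k^2-1)/p$ under uniform sampling of the $a_{ij}$ mod $p$; sampling from $[1, pq)$ instead of $[0, pq)$ introduces only a negligible $O(1/(pq))$ bias per component that is absorbed by the bound. The analogous estimate mod $q$, combined with the union bound from the first step, delivers the claimed $(k^2-1)(1/p + 1/q)$.
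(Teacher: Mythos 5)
Your overall architecture is sound and matches the paper in its outer layers: the CRT/union-bound reduction to $\bmod\ p$ and $\bmod\ q$, the degree count $0+(k-1)\cdot 1+\binom{k}{2}\cdot 2=k^2-1$ (which agrees with the paper's count $2\binom{k}{2}+k-1$), and the Schwartz--Zippel application are all exactly right. Where you diverge is on the crux --- showing $\det A^{\circ2}_{\star k}$ is not the zero polynomial --- and here your proposal has a genuine gap. The factorisation $\det A^{\circ2}_{\star k}=\det(M_k)\cdot\det(A_k)^{k+1}$ is correct (it follows from $A^{\circ2}_{\star k}=M_k\,\mathrm{Sym}^2(A_k)$ together with the standard identity $\det\mathrm{Sym}^2(A)=(\det A)^{k+1}$, and the degrees check out since $\det A_k$ has degree $k-1$). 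But the entire burden then rests on $\det(M_k)\neq 0$ in $\mathbb{F}_p$ and $\mathbb{F}_q$, and for that you offer only a plan (induction via cofactor expansion on the block structure of $U_k$) plus two numerical data points. As written, nothing is proved for general $k$. There is also a secondary soft spot: arguing that a nonzero \emph{integer} $\det(M_k)$ is ``automatically a unit mod $p$'' requires a size bound on $\det(M_k)$ (Hadamard gives roughly $2^{O(k^2\log k)}$, which is not obviously below $p$ for all parameter choices); the clean fix is to prove nonvanishing directly in each prime field of characteristic $\neq 2$.

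The gap is closable, and more easily than by your proposed induction. $M_k$ is invertible over a field of characteristic $\neq 2$ if and only if the $\binom{k+1}{2}$ rank-one quadratic forms $(\mbf{r}_\ell^T\mbf{x})^2$, where $\mbf{r}_\ell$ ranges over the rows of $U_k$, are linearly independent; these are $x_i^2$ $(1\leq i\leq k)$ and $(2x_i-x_j)^2=4x_i^2-4x_ix_j+x_j^2$ $(i<j)$, and modulo the span of the squares the latter reduce to $-4x_ix_j$, so independence is immediate. With that one observation supplied, your route becomes a complete and arguably cleaner proof than the paper's, which instead runs a direct induction on $k$ showing that no nontrivial identical linear dependence $\sum\lambda_{ij}\mbf{a}_{\star i}\circ\mbf{a}_{\star j}=0$ can hold (an argument that never isolates the constant matrix $M_k$ and, incidentally, leans on a ``with high probability'' statement inside what should be a purely algebraic nonvanishing claim). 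Your factorisation also yields strictly more information --- it exhibits $\det A^{\circ2}_{\star k}$ as a perfect $(k+1)$-th power of $\det A_k$ up to a unit, which subsumes Lemma~\ref{lem:10} transparently --- but until $\det(M_k)\neq 0$ is actually established, the theorem is not proved.
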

Note that Theorem~\ref{thm:10} subsumes Lemma~\ref{lem:10}, since the first $k$ columns of $A^{\circ2}_{\star k}$ contain $A_k$ as a submatrix, and must be linearly independent.

 Each $\mbf{c}$ introduces $k$ new parameters
$rp,s_1,\ldots,s_{k-1}$ and $k$ equations, so the number of undetermined parameters is always $k(k-1)$.

\subsubsection*{Cryptanalysis}\label{HEk cryptanalysis}
Note that $p$ can still be determined if we know $m_i$ for $k$ ciphertexts. Then let
\[ C=[\mbf{c}_1-m_1\mbf{1}\ \ldots\ \mbf{c}_k-m_k\mbf{1}],\quad A_k=[\mbf{1}\ \mbf{a}_1\ \ldots\ \mbf{a}_{k-1}]\]
and let
\[ W=\left[\begin{array}{c@{\ \ }c@{\ \ }c@{\ \ }c@{\ \ }c@{\ \ }c}
         r_{1}p & r_{2}p  & \ldots & r_{k}p \\
         s_{1,1} & s_{2,1} & \ldots & s_{k,1} \\
         \vdots & & & \vdots \\
          s_{1,k-1} & s_{2,k-1} & \ldots & s_{k,k-1}
\end{array}\right],\]
\[W'=\left[\begin{array}{c@{\ \ }c@{\ \ }c@{\ \ }c@{\ \ }c@{\ \ }c}
         r_{1} & r_{2}  & \ldots & r_{k} \\
         s_{1,1} & s_{2,1} & \ldots & s_{k,1} \\
         \vdots & & & \vdots \\
          s_{1,k-1} & s_{2,k-1} & \ldots & s_{k,k-1}
\end{array}\right],\]
where $r_i,s_{ij}$ refer to $\mbf{c}_i$. Then $C=A_kW$, and so $\det C=\det A_k\det W$. Note that $\det W=p\det W'$, so $\det C$ is a multiple of $p$. Now $\det C$ can be determined in $O(k^3)$ time and, if it is nonzero, $p$ can be determined as $\gcd(\det C,pq)$. Then $p$ can be recovered if $\det C\neq 0$.
\begin{restatable}{lemma}{lemthirty}
\label{lem:30}
$\Pr(\det C = 0 \bmod\, pq)\leq (2k-1)(1/p+1/q)$.	
\end{restatable}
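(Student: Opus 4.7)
The plan is to factor $\det C$ so that the event $\det C\equiv 0\pmod{pq}$ reduces to a mod-$q$ condition on two independent determinants, each of which can then be bounded by a Schwartz--Zippel-style argument.

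First, I would use the identity $C = A_k W$ to write $\det C = \det A_k\cdot \det W$. Since the first row of $W$ is $(r_1 p,\ r_2 p,\ \ldots,\ r_k p)$, pulling $p$ out of that row gives $\det W = p\det W'$, so
\[\det C \;=\; p\cdot \det A_k \cdot \det W'.\]
As $p$ and $q$ are distinct primes, $pq\mid \det C$ is equivalent to $q\mid \det A_k\det W'$, which, since $q$ is prime, holds iff $q\mid \det A_k$ or $q\mid \det W'$. A union bound thus reduces the lemma to bounding these two probabilities.

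For $\Pr(q\mid \det A_k)$, the vectors $\mbf{a}_j$ are uniform in $[1,pq)^k$, so their reductions mod $q$ are (very nearly) uniform in $\mathbb{Z}_q^k$ and mutually independent. Building $A_k=[\mbf{1}\ \mbf{a}_1\ \ldots\ \mbf{a}_{k-1}]$ column by column, the probability that the $(j+1)$-st column lies in the span of the previous $j$ columns mod $q$ is at most $q^{j-k}$, so summing over $j=1,\ldots,k-1$ gives $\Pr(q\mid \det A_k)\le (k-1)/q$; this is essentially the mod-$q$ half of the argument behind Lemma~\ref{lem:10}. For $\Pr(q\mid \det W')$, each $r_i$ is exactly uniform on $[0,q)=\mathbb{Z}_q$ and each $s_{i,j}$ is uniform on $[0,pq)$, hence exactly uniform mod $q$ since $pq$ is a multiple of $q$; the entries are mutually independent, so $W'\bmod q$ is a uniformly random matrix in $\mathbb{F}_q^{k\times k}$, and the same column-by-column argument yields $\Pr(q\mid \det W')\le \sum_{i=1}^{k} q^{-i}\le k/q$.

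Summing the two bounds gives $\Pr(\det C\equiv 0\bmod pq)\le (2k-1)/q$, which is comfortably inside the stated bound $(2k-1)(1/p+1/q)$. There is no real obstacle here beyond the standard mechanics; the only mild subtlety is that the $\mbf{a}_j$'s are drawn from $[1,pq)^k$ rather than $[0,pq)^k$, so their reductions mod $q$ are only approximately uniform, but the deviation is of order $O(1/(pq))$ per entry and is comfortably absorbed in the slack $(2k-1)/p$ between my bound and the claimed one.
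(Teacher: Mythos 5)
Your proof is correct and follows essentially the same route as the paper's: factor $\det C=\det A_k\cdot p\cdot\det W'$ and bound the singularity of $A_k$ and $W'$ separately (the paper invokes the Schwartz--Zippel lemma where you count column-by-column, but that is cosmetic). Your version is in fact slightly sharper: by observing that $\det C$ is automatically a multiple of $p$, so that only the mod-$q$ event matters, you obtain $(2k-1)/q$, whereas the paper also bounds the irrelevant mod-$p$ events and settles for $(2k-1)(1/p+1/q)$; your explicit handling of the near-uniformity of the $\mbf{a}_j$ modulo $q$ is also more careful than the paper's.
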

Once we have recovered $p$, we can use the known $m_i$ to determine the decryption vector $\mbf{\gamma}$, by solving linear equations. Let \[C_0 = [\mbf{c}_1\ \mbf{c}_2\ \ldots\ \mbf{c}_k],\quad \mbf{m}^T = [m_1\ m_2\ \ldots\ m_k].\vspace{-2ex}\]
\begin{restatable}{lemma}{lemthirtyfive}
\label{lem:35}
$\Pr(\det C_0 = 0 \bmod\, pq)\leq (2k-1)(1/p+1/q)$.
\end{restatable}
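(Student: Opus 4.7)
The plan is to use the factorisation $C_0 = A_k M_0$, where $M_0$ is the $k\times k$ matrix whose $i$-th column is $(m_i+r_i p,\,s_{i,1},\,\ldots,\,s_{i,k-1})^T$, so that $\det C_0 \equiv \det A_k\cdot\det M_0 \pmod{pq}$. The relevant event for the ensuing cryptanalysis is that $C_0$ fails to be invertible modulo $pq$, which is equivalent to $p\mid\det C_0$ or $q\mid\det C_0$; since $\mathbb{F}_p$ and $\mathbb{F}_q$ are integral domains, each alternative in turn requires that the corresponding prime divides either $\det A_k$ or $\det M_0$. A four-way union bound then gives
\[
\Pr(\det C_0 \equiv 0 \bmod pq) \;\leq\; \Pr(p\mid\det A_k)+\Pr(p\mid\det M_0)+\Pr(q\mid\det A_k)+\Pr(q\mid\det M_0).
\]

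The two $A_k$ contributions are already controlled by the argument for Lemma~\ref{lem:10}: modulo a prime $\pi\in\{p,q\}$, the probability that $\mbf{a}_0,\mbf{a}_1,\ldots,\mbf{a}_{k-1}$ are linearly dependent is at most $\sum_{j=1}^{k-1}\pi^{j-k}\leq (k-1)/\pi$, so $\Pr(p\mid\det A_k)\leq(k-1)/p$ and $\Pr(q\mid\det A_k)\leq(k-1)/q$.

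The main work lies in bounding $\Pr(p\mid\det M_0)$ and $\Pr(q\mid\det M_0)$. Modulo $q$ every entry of $M_0$ is uniform and independent in $\mathbb{F}_q$: the first-row entries $m_i+r_ip$ are uniform because $r_i\uar[0,q)$ and $\gcd(p,q)=1$, and $s_{i,j}\uar[0,pq)$ is uniform modulo $q$. Hence $M_0\bmod q$ is a uniform random $k\times k$ matrix over $\mathbb{F}_q$, and the usual rank union bound gives $\Pr(q\mid\det M_0)\leq k/q$. Modulo $p$ the first row is the fixed vector $(m_1,\ldots,m_k)$ while the remaining $k-1$ rows are uniform in $\mathbb{F}_p^k$; viewing $\det M_0$ as a polynomial in the random $s_{i,j}\bmod p$ of total degree $k-1$, Schwartz--Zippel yields $\Pr(p\mid\det M_0)\leq(k-1)/p$ whenever this polynomial is not identically zero. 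Non-triviality is checked by exhibiting a single good assignment: for example, taking $s_{i,i-1}=1$ for $i\geq 2$ and all other $s_{i,j}=0$ renders $M_0$ lower triangular with some $m_j$ on the diagonal, so the polynomial is nonzero provided at least one plaintext is nonzero mod $p$---a harmless assumption given $M<p$. Absorbing the degenerate all-zero case raises the bound to $k/p$.

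Summing the four contributions yields $(k-1)/p+k/p+(k-1)/q+k/q=(2k-1)(1/p+1/q)$, as required. The principal obstacle is the mod-$p$ analysis of $\det M_0$, since its first row is not random; once the explicit assignment above certifies that the relevant polynomial in the $s_{i,j}$ is not identically zero, the remainder of the proof mirrors the style of Lemmas~\ref{lem:10} and~\ref{lem:30}.
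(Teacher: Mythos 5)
Your proof is correct in substance but takes a genuinely different route from the paper's. The paper disposes of this lemma in one line: it observes that $C_0$ coincides with the matrix $C$ of Lemma~\ref{lem:30} when $m_1=\cdots=m_k=0$, and appeals to that lemma. You instead work directly with the factorisation $C_0=A_kM_0$ and a four-way union bound, redoing the Schwartz--Zippel analysis for $M_0$; the only genuinely new work is the mod-$p$ treatment of the fixed first row $(m_1,\ldots,m_k)$, which you certify with an explicit witness assignment. What your approach buys is a self-contained argument that visibly covers arbitrary plaintexts, whereas the paper's reduction is silent on why the general case follows from the all-zero case (mod $p$ the first row of $M_0$ is $(m_1,\ldots,m_k)$ rather than $\mbf{0}$, so the distribution of $\det M_0 \bmod p$ genuinely depends on the $m_i$). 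One wrinkle: your claim that ``absorbing the degenerate all-zero case raises the bound to $k/p$'' is not right as stated --- if every $m_i\equiv 0\pmod p$ then the first row of $M_0$ vanishes mod $p$ and $p\mid\det M_0$ with probability $1$, not $k/p$. This does not damage the lemma, because the stated event $\det C_0\equiv 0\pmod{pq}$ requires divisibility by $q$ as well, and your mod-$q$ bound of $(2k-1)/q$ alone covers that case; equivalently, when all $m_i=0$ one has $C_0=C$ and Lemma~\ref{lem:30} applies verbatim, which is precisely the paper's observation. So the cleanest version combines the two: the paper's reduction for the degenerate case, your direct Schwartz--Zippel argument otherwise.
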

Thus, with high probability, we can solve the system \[ \mbf{\gamma}^TC_0= \mbf{m}^T\quad\mod{p}\] uniquely, to recover $\mbf{\gamma}$ and enable decryption of an arbitrary ciphertext. However, encryption of messages is not possible, since we gain little information about $\mbf{a}_1,\ldots,\mbf{a}_k$. Note also that, if we determined $p$ by some means other  than using $k$ known plaintexts, it is not clear how to recover $\mbf{\gamma}$.

To break this system, we need to guess $k$ plaintexts. The entropy of a $k$-tuple of plaintexts $(m_1,m_2,\ldots,m_k)$ is $k\rho$, so effectively we need $\mu^k$ guesses, where $\mu$ is the number of guesses needed to break HE1. So HE$k$ can tolerate much smaller entropy than HE1, provided $k$ is large enough. If $k$ is sufficiently large, the scheme appears secure without adding noise, though it does not have the other advantages of adding noise. We discuss this further in section~\ref{sec:fhe}.

\subsubsection*{Fixing an insecurity for $k>2$}
The decryption vector for HE$k$ is $\mbf{\gamma}^T=(A_k^{-1})_1$.  Note that $\mbf{\gamma}^T\mbf{1}=1$  and $\mbf{\gamma}^T\mbf{a}_i=0$ ($i\in[1,k-1]$), since $\mbf{\gamma}^T\mbf{a}_i=I_{1i}$ ($i\in[0,k-1]$).

The equations
\begin{equation}\label{equ:100}
 R(\mbf{a}_{\star i}\circ\mbf{a}_{\star j})\ =\ p\varrho_{ij}\thsp\mbf{1}+\sum_{l=1}^{k-1}\sigma_{ijl}\thsp\mbf{a}_l,
\end{equation}
define a product $\cdot$ on $\mathbb{Z}^k_{pq}$ so that
$\mbf{c}\cdot\mbf{c}'=R(\mbf{c}_{\star}\circ\mbf{c}'_{\star})$. This product is linear, commutative and  distributive, since $R$ and $\star$ are linear operators, and $\circ$ is commutative and  distributive. So we have an algebra $\mcl{A}_k$, with unit element $\mbf{1}$ \cite{schafer1966introduction}. The $\varrho_{ij},\sigma_{ijl}$ ($i,j,l\in[1,k-1])$ are the \emph{structure constants} of the algebra. In general, $\mcl{A}_k$ will not be associative, i.e. we can have
\begin{align*}
	R(R(\mbf{c}_{1\star}\circ\mbf{c}_{2\star})_*\circ\mbf{c}_{3\star})&=(\mbf{c}_1\cdot\mbf{c}_2)\cdot\mbf{c}_3\\
\neq\mbf{c}_1\cdot(\mbf{c}_2\cdot\mbf{c}_3)&=R(\mbf{c}_{1\star}\circ R(\mbf{c}_{2\star}\circ\mbf{c}_{3\star})_*).
\end{align*}
This leads to the following potential insecurity. We must have
\begin{equation}\label{eq:110}
\mbf{\gamma}^T((\mbf{c}_1\cdot\mbf{c}_2)\cdot\mbf{c}_3)\ =\ \mbf{\gamma}^T(\mbf{c}_1\cdot(\mbf{c}_2\cdot\mbf{c}_3))\quad\pmod p,
\end{equation}
in order to have correct decryption. The \emph{associator} for $\mcl{A}_k$~is
\begin{align*}
[\mbf{c}_i,\mbf{c}_j,\mbf{c}_l]\ &= \mbf{c}_i\cdot(\mbf{c}_j\cdot\mbf{c}_l)-(\mbf{c}_i\cdot\mbf{c}_j)\cdot\mbf{c}_l\\ &=rp\mbf{1}+\sum_{l=1}^{k-1}s_{l}\thsp\mbf{c}_l\ \, \pmod {pq}.
\end{align*}
Thus $[\mbf{c}_i,\mbf{c}_j,\mbf{c}_l]$ is an encryption of $0$. If we can find $k$ such associators from $\mbf{c}_1,\ldots,\mbf{c}_n$ which violate~\eqref{eq:110}, then with high probability we will have $k$ linearly independent associators.   We can use use these to make a collision attack on HE$k$, in a similar way to that described in section~\ref{sec:he1}. We use the $\gcd$ method to determine $p$, and then $\mbf\gamma$, as described in section~\ref{HEk cryptanalysis}. In fact all we need is that~\eqref{eq:110} holds for any associator. That is, for all $\mbf{c}_1, \mbf{c}_2, \mbf{c}_3$, we need
\begin{equation*}
\mbf{\gamma}^T((\mbf{c}_1\cdot\mbf{c}_2)\cdot\mbf{c}_3)\ =\ \mbf{\gamma}^T(\mbf{c}_1(\cdot\mbf{c}_2\cdot\mbf{c}_3))\quad\pmod {pq},
\end{equation*}
or, equivalently, using the CRT,
\begin{equation}\label{eq:120}
\mbf{\gamma}^T((\mbf{c}_1\cdot\mbf{c}_2)\cdot\mbf{c}_3)\ =\ \mbf{\gamma}^T(\mbf{c}_1\cdot(\mbf{c}_2\cdot\mbf{c}_3))\quad\pmod {q}.
\end{equation}
By linearity, it follows that~\eqref{eq:120}  holds if and only if it holds for all basis elements, excluding the identity. That is, for all $i,j,l\in[1,k-1]$, we need
\begin{equation}\label{eq:140}
\mbf{\gamma}^T(\mbf{a}_i\cdot(\mbf{a}_j\cdot\mbf{a}_l))\ =\ \mbf{\gamma}^T((\mbf{a}_i\cdot\mbf{a}_j)\cdot\mbf{a}_l)\quad\pmod {q}.
\end{equation}
The associator for $\mcl{A}_k$ is
\begin{align*}
[\mbf{a}_i,\mbf{a}_j,\mbf{a}_l]\ &= \mbf{a}_i\cdot(\mbf{a}_j\cdot\mbf{a}_l)-(\mbf{a}_i\cdot\mbf{a}_j)\cdot\mbf{a}_l\\ &=rp\mbf{1}+\sum_{l=1}^{k-1}s_{l}\thsp\mbf{a}_l\ \, \pmod {pq},
\end{align*}
for some integers $r,s_1,\ldots,s_{k-1}$, and so $\mbf{\gamma}^T[\mbf{a}_i,\mbf{a}_j,\mbf{a}_l]=rp$.

If $\mcl{A}_k$ is associative, the problem does not arise, since \eqref{eq:140} will be satisfied automatically. Associativity holds for $k\leq2$, since all we have to check is that $\mbf{a}\cdot(\mbf{a}\cdot\mbf{a})=(\mbf{a}\cdot\mbf{a})\cdot\mbf{a}$, which is true by commutativity. Thus HE$k$ with $k\leq2$ cannot be attacked in this way.
%

Requiring associativity in $\mcl{A}_k$ would overconstrain the system, since it imposes $k\binom{k+1}{2}$ equations on the $k\binom{k+1}{2}$ structure constants. We have only $k(k-1)$ undetermined parameters, so this is too much. But all we need is that \eqref{eq:140} holds. We have the following.

\begin{restatable}{lemma}{lemoneforty}
\label{lem:140}
 \eqref{eq:140} holds if and only if
\[ \ts\sum_{t=1}^{k-1}\sigma_{jlt}\varrho_{it}=\ts\sum_{t=1}^{k-1}\sigma_{ijt}\varrho_{lt}\pmod{q},\ \forall i,j,l\in[1,k-1].\]	
\end{restatable}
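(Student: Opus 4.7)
The plan is a direct expansion of both sides of \eqref{eq:140} using the defining relation \eqref{equ:100} together with the two key properties of the decryption vector, namely $\mbf{\gamma}^T\mbf{1}=1$ and $\mbf{\gamma}^T\mbf{a}_t=0$ for $t\in[1,k-1]$. I will also use the already-established facts that $\cdot$ is linear, commutative, and has unit element $\mbf{1}$, together with the linear independence of $\mbf{1},\mbf{a}_1,\ldots,\mbf{a}_{k-1}$ (Lemma~\ref{lem:10}), which in particular forces $\varrho_{tl}=\varrho_{lt}$ and $\sigma_{tls}=\sigma_{lts}$ from commutativity.

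First I would substitute $\mbf{a}_j\cdot\mbf{a}_l=p\varrho_{jl}\mbf{1}+\sum_{t=1}^{k-1}\sigma_{jlt}\mbf{a}_t$ from \eqref{equ:100}, then multiply on the left by $\mbf{a}_i$, using linearity of $\cdot$ and the unit property $\mbf{1}\cdot\mbf{a}_i=\mbf{a}_i$, to obtain
\[\mbf{a}_i\cdot(\mbf{a}_j\cdot\mbf{a}_l)\ =\ p\varrho_{jl}\thsp\mbf{a}_i+\sum_{t=1}^{k-1}\sigma_{jlt}\Big(p\varrho_{it}\mbf{1}+\sum_{s=1}^{k-1}\sigma_{its}\thsp\mbf{a}_s\Big)\pmod{pq}.\]
Applying $\mbf{\gamma}^T$ annihilates every $\mbf{a}$-term, so only the middle $\mbf{1}$-contribution survives and $\mbf{\gamma}^T(\mbf{a}_i\cdot(\mbf{a}_j\cdot\mbf{a}_l))=p\sum_{t=1}^{k-1}\sigma_{jlt}\varrho_{it}\pmod{pq}$. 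A symmetric expansion of $(\mbf{a}_i\cdot\mbf{a}_j)\cdot\mbf{a}_l$, invoking commutativity so that $\varrho_{tl}=\varrho_{lt}$, gives $\mbf{\gamma}^T((\mbf{a}_i\cdot\mbf{a}_j)\cdot\mbf{a}_l)=p\sum_{t=1}^{k-1}\sigma_{ijt}\varrho_{lt}\pmod{pq}$.

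Subtracting, \eqref{eq:140} is equivalent to $p\big(\sum_{t}\sigma_{jlt}\varrho_{it}-\sum_{t}\sigma_{ijt}\varrho_{lt}\big)\equiv 0\pmod{q}$, and since $p,q$ are distinct primes, $p$ is a unit modulo $q$, so it may be cancelled to yield exactly the claimed identity; the converse direction follows by reading the same computation backwards. The only point requiring care is tracking the modulus: every algebraic identity above holds only $\bmod\ pq$, and the factor $p$ can be cancelled only after reducing to $\bmod\ q$. No step presents a genuine obstacle; once the expansion is laid out, the annihilation of $\mbf{a}$-terms by $\mbf{\gamma}^T$ reduces the lemma to a short bookkeeping exercise.
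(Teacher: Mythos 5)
Your proof is correct and follows essentially the same route as the paper's: expand $\mbf{a}_i\cdot(\mbf{a}_j\cdot\mbf{a}_l)$ via \eqref{equ:100}, apply $\mbf{\gamma}^T$ using $\mbf{\gamma}^T\mbf{1}=1$ and $\mbf{\gamma}^T\mbf{a}_t=0$ to get $p\sum_t\sigma_{jlt}\varrho_{it}$, do the symmetric computation for the other association, and cancel the unit $p$ modulo $q$. Your extra care about the modulus bookkeeping and the symmetry $\varrho_{tl}=\varrho_{lt}$ is a slightly more explicit rendering of steps the paper passes over with ``similarly,'' but the argument is the same.
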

There are several ways to ensure that~\eqref{eq:140} holds. We will do so by giving the $\varrho_{ij}$ a multiplicative structure.
\begin{restatable}{lemma}{lemonefifty}
\label{lem:150}
Let $\tau,\varrho_i\uar[0,q)$ $(i\in[1,k-1]$), let $\varrho_{ij}=\varrho_i\varrho_j \mod{q}$, and let the $\sigma_{ijl}$ satisfy $\sum_{l=1}^{k-1}\sigma_{ijl}\thsp\varrho_l=\tau\varrho_i\varrho_j\pmod{q}$ for all $i,j\in[1,k-1]$. Then, for all $i,j,\ell\in[1,k-1]$, $\mbf{\gamma}^T(\mbf{a}_i\cdot(\mbf{a}_j\cdot\mbf{a}_l))=\tau\varrho_i\varrho_j\varrho_l\mod{q}$, the symmetry of which implies that~\eqref{eq:140} holds.	
\end{restatable}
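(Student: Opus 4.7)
My plan is to expand both triple products $\mbf{a}_i\cdot(\mbf{a}_j\cdot\mbf{a}_l)$ and $(\mbf{a}_i\cdot\mbf{a}_j)\cdot\mbf{a}_l$ using~\eqref{equ:100}, apply $\mbf{\gamma}^T$ using its known action on the basis $\mbf{1},\mbf{a}_1,\ldots,\mbf{a}_{k-1}$, and then invoke the two multiplicative hypotheses $\varrho_{ij}=\varrho_i\varrho_j$ and $\sum_t\sigma_{ijt}\varrho_t=\tau\varrho_i\varrho_j\pmod{q}$ to reduce each side to a manifestly symmetric expression in $(i,j,l)$. Since both sides then coincide, \eqref{eq:140} follows.

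The execution goes as follows. First I would apply~\eqref{equ:100} to write $\mbf{a}_j\cdot\mbf{a}_l=p\varrho_{jl}\mbf{1}+\sum_{t=1}^{k-1}\sigma_{jlt}\mbf{a}_t$. Because $\cdot$ is bilinear (as $\mbf{c}\cdot\mbf{c}'=R(\mbf{c}_\star\circ\mbf{c}'_\star)$ is linear in each argument) and $\mbf{1}$ is the unit of $\mcl{A}_k$, this gives
\[
\mbf{a}_i\cdot(\mbf{a}_j\cdot\mbf{a}_l)\ =\ p\varrho_{jl}\mbf{a}_i\ +\ \sum_{t=1}^{k-1}\sigma_{jlt}(\mbf{a}_i\cdot\mbf{a}_t),
\]
and a second application of~\eqref{equ:100} expands each inner product. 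Since $\mbf{\gamma}^T=(A_k^{-1})_1$ is defined mod $pq$, we have $\mbf{\gamma}^T\mbf{1}=1$ and $\mbf{\gamma}^T\mbf{a}_t=0\pmod{pq}$ for $t\in[1,k-1]$, and these identities descend to mod $q$. Applying $\mbf{\gamma}^T$ annihilates the $\mbf{a}_i$ term and all the inner $\sigma_{its}\mbf{a}_s$ pieces, leaving
\[
\mbf{\gamma}^T\bigl(\mbf{a}_i\cdot(\mbf{a}_j\cdot\mbf{a}_l)\bigr)\ =\ p\sum_{t=1}^{k-1}\sigma_{jlt}\varrho_{it}\pmod{q}.
\]
Now substituting $\varrho_{it}=\varrho_i\varrho_t$ factors out $\varrho_i$, and the hypothesis on the $\sigma$'s collapses what is left to $p\tau\varrho_i\varrho_j\varrho_l\pmod{q}$. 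The identical calculation on $(\mbf{a}_i\cdot\mbf{a}_j)\cdot\mbf{a}_l$ (expanding the outer factor rather than the inner one) produces $p\sum_t\sigma_{ijt}\varrho_{tl}=p\varrho_l\cdot\tau\varrho_i\varrho_j=p\tau\varrho_i\varrho_j\varrho_l\pmod{q}$, which matches, so~\eqref{eq:140} holds for every triple.

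The calculation is entirely mechanical, but the step that requires care is the order of substitutions: the multiplicative hypothesis on $\varrho_{ij}$ must be applied \emph{after} $\mbf{\gamma}^T$ has been taken, so that an isolated $\varrho_i$ (respectively $\varrho_l$) can be pulled outside the sum, thereby exposing the defining relation on the $\sigma_{ijl}$ inside what remains. Without the rank-one structure on $\varrho_{ij}$ the outer sum would not factorise symmetrically and the condition of Lemma~\ref{lem:140} would generically fail; conversely, with this structure both sides reduce to the totally symmetric cube $p\tau\varrho_i\varrho_j\varrho_l$, which is exactly the symmetry the lemma invokes.
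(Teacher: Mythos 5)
Your proof is correct and follows essentially the same route as the paper: apply $\mbf{\gamma}^T$ to the expansion from~\eqref{equ:100} to get $p\sum_t\sigma_{jlt}\varrho_{it}$, substitute the rank-one structure $\varrho_{it}=\varrho_i\varrho_t$, factor out $\varrho_i$, and invoke the hypothesis on the $\sigma$'s to reach the symmetric value $p\tau\varrho_i\varrho_j\varrho_l$. The only cosmetic difference is that you re-derive the identity $\mbf{\gamma}^T(\mbf{a}_i\cdot(\mbf{a}_j\cdot\mbf{a}_l))=p\sum_t\sigma_{jlt}\varrho_{it}$ and compute both association orders explicitly, whereas the paper imports that identity from the proof of Lemma~\ref{lem:140} and appeals to symmetry for the second order.
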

Thus the conditions of Lemma~\ref{lem:150} are sufficient to remove the insecurity. The price is that we now have $(k-1)\binom{k}{2}+(k-1)+k(k-1)=(k+1)\binom{k}{2}+k-1$ parameters and $k\binom{k}{2}$ equations. There are $\binom{k}{2}+(k-1)=(k+2)(k-1)/2$ independent parameters. This is fewer than the original $k(k-1)$, but remains $\Omega(k^2)$.

\subsection{Insufficient entropy (\textbf{\large HE$k$N})}
\label{sec:hekn}
In this section, we generalise HE2N to $k$ dimensions. $\Gen$, randomly chooses $\kappa$, $p$ and $q$ according to the bounds given in section \ref{sec:he2n}. $\forall j$, $\Gen$ sets $\mbf{a}_j$ as in \ref{sec:hek}. The secret key, \sk, is ($\kappa$, $p$, $\mbf{a}_1$, $\ldots$, $\mbf{a}_{k-1}$), and the public parameters are $pq$ and $R$. $R$ is as given in section \ref{sec:hek}. Note that, as a result of adding the ``noise'' term, defence against non-associativity is not required.

A plaintext, $m \in [0,M]$, is enciphered as
\[\Enc(m,\textrm{\sk})=\mbf{c}= (m+rp+s\kappa)\mbf{1}+\sum_{j=1}^{k-1}t_j\mbf{a}_j \pmod{pq} \] where $r,s$ are as in section \ref{sec:he2n}, and $t_j \uar [0,pq)$ $\forall j\in[1,k)$.

A ciphertext is deciphered by,
\[\Dec(\mbf{c},\textrm{\sk})= (\mbf{\gamma}^T\thsp\mbf{c}\mod p) \mod \kappa.\]
where $\mbf{\gamma}^T$ is defined as in section \ref{sec:hek}.

Addition and multiplication of ciphertexts are as in section~\ref{sec:hek}.

The effective entropy of HE$k$N is $\rho'=k(\rho + \lg \kappa)$. Thus, as we increase $k$, the ``noise'' term can be made smaller while still providing the requisite level of entropy.

Clearly HE$k$N also inherits the conclusions of Theorem~\ref{thm:2}.
\section{An extension of HE2N using the CRT (HE2NCRT)}
\label{sec:he2ncrt}
As an interesting aside, we extend HE2N (section \ref{sec:he2n}) using a technique inspired by CRT secret sharing, so that we compute the final result modulo a product of primes $\prod_{j=1}^K p_j$ rather than modulo $p$, where $K$ is the number of primes.

In this scheme, we distribute the computation. We have $K$ processors. Each processor computes arithmetic on ciphertexts modulo $p_jq_j$, where $p_j,q_j$ are suitable primes. Also, each processor only receives the $j$th ciphertext vector of an integer. Addition and multiplication of ciphertexts is as defined in section \ref{sec:he2}, except that it is performed modulo $p_jq_j$.

This serves two purposes. The first is to be able to handle larger moduli by dividing the computation into subcomputations on smaller moduli. The second is to mitigate against exposure of the secret key $p$ in the system presented in section \ref{sec:he2n}, by not distributing the modulus $pq$ to each processor. Instead, we distribute $p_jq_j$ to the $j$th processor, for $j \in [1,K]$. This allows us to partition the computation into subcomputations, encrypted using different parameters. Thus, should an attacker compromise one subcomputation, they may gain no knowledge of other subcomputations.
\subsubsection*{Key Generation}
$\Gen$, randomly chooses $\kappa$ as in section \ref{sec:he1n}. For all $j \in [1,K]$, it randomly chooses a prime $p_j$ such that $p_j$ satisfies $2^{\lambda-1}<p_j<2^\lambda$ and \[\Pi = \prod\limits_{j=1}^K p_j > (n+1)^d (M+\kappa^2)^d.\]
It also randomly chooses $q_j$,  $j \in [1,K]$, as for $q$ in section \ref{sec:he1}. Finally, it sets $\mbf{a}_j = [a_{j1}\  a_{j2}]^T$, where $a_{jk}\uar [1,p_jq_j)$ $( j \in[1,K], k\in [1,2])$ such that $a_{j1}\neq a_{j2}\mod p$ and $a_{j1}\neq a_{j2}\mod q$.
The secret key, \sk, is $(\kappa,p_1,\ldots,p_K,\mbf{a}_1,\ldots,\mbf{a}_K)$, and the public parameters are $p_jq_j$ $(j\in[1,K])$ and $R_j$ $(j\in[1,K])$, where each $R_j$ is defined as $R$ in section \ref{sec:he2}.
\subsubsection*{Encryption}
We encrypt an integer, $m_i$ ($i\in[1,n]$), as the set of $K$ 2-vectors, $\mbf{c}_{ij}$,
\begin{align*}
	\mbf{c}_{ij} = (m_i + r_{ij} p_j + s_i\kappa)\mbf{1} + t_{ij} \mbf{a}_j \bmod p_j q_j \  (j \in [1,K]),
\end{align*}
where $r_{ij}\uar [0,q_j)$, $s_i\uar[0,\kappa)$, and $t_{ij} \uar$ $[0,p_jq_j)$ $(i\in[1,n], j \in [1,K])$.

\subsubsection*{Decryption}
To decrypt, we first decrypt the $j$th ciphertext of the computational result $\mbf{c}_j$ as in section~\ref{sec:he2}, to give
\begin{align*}
	P_j= \mbf{\gamma}_j^T \mbf{c}_j \mod {p_j},
\end{align*}
where $P_j$ is the residue of $P(m_1,m_2,\ldots,m_n,\kappa) \mod p_j$ and $\mbf{\gamma}_j^T=(a_{j2}-a_{j1})^{-1}[a_{j2}\ -a_{j1}]$.

We then use the Chinese Remainder Theorem to compute the plaintext as \[P(m_1,m_2,\dots,m_n)=\bigg(\sum\limits_{j=1}^K P_jM_j \mu_j  \mod \Pi\bigg)\mod \kappa,\] where $M_j = \Pi/p_j$ and $\mu_j=M_j^{-1} \bmod{p_j}$.

\subsubsection*{Addition and Multiplication}
Addition of ciphertexts is performed as in \ref{sec:he2}. Multiplication of ciphertexts on processor $j$ is now \[ \Mult(c_j,c'_j)=R_j (c_{j \star} \circ c'_{j \star}).\]

Clearly HE$k$N could be extended to HE$k$NCRT in a similar way, but we will not discuss the details here.

\section{Fully Homomorphic System}
\label{sec:fhe}
We return to HE$k$, presented above in section \ref{sec:hek}. We will show that, for large enough $k$, this can be used to create an FHE system.

We may use HE$k$ to evaluate an arithmetic circuit homomorphically, where $\mathsf{R}=\mathbb{Z}_{pq}$. However, this system is only somewhat homomorphic. If the computational result grows larger than $p$, we are unable to successfully decrypt it. This restricts us to arithmetic circuits of bounded depth to avoid the blow up. To make it fully homomorphic, we consider Boolean circuits.

Typically, we will use  the binary Boolean functions, AND, OR, and NOT in the Boolean circuit. However, we may use fewer  functions. Any Boolean circuit may be represented using only NAND gates \cite{scharle1965}. Recall that the indegree of any gate in the circuit is always $2$, but the outdegree is arbitrary.  The inputs to each gate are bits, as are the outputs. We will denote the set of inputs to the circuit by $I\subseteq V$, and the set of outputs by $O\subseteq V$.  The inputs have indegree 0, and the outputs have outdegree 0, but we regard the inputs as having indegree 1, and the outputs as having outdegree 1, with wires from and to the external environment $\Lambda$.

Note that constant input bits can easily be eliminated from the circuit, so we assume there are none, to avoid an obvious KPA. Even so, if we represent the bit values $0,1$ by encrypting known constants $\alpha_0,\,\alpha_1$, the HE$k$ system is open to a simple KPA. For any ciphertext $\mbf{c}$, we can compute $\mbf{c}'=(\mbf{c}-\alpha_0\mbf{1})\cdot(\mbf{c}-\alpha_1\mbf{1})$.
Then $\mbf{c}'$ is an encryption of~$0$. By repeating this on $k$ ciphertexts, we can obtain $k$ linearly independent encryptions of zero with high probability. Once we have done this, we can determine $p$ and $\mbf\gamma$ as in section~\ref{HEk cryptanalysis}. The problem, of course, is that we have not increased the entropy of the input data.

Therefore, we must add noise to the ciphertexts, but we will do this so as to ensure that the noise does not grow with the the depth of the circuit. On each wire $e\in E$, we will represent the bit value $b_e\in\{0, 1\}$ by $w_e\in\{\omega_{0e},\omega_{1e}\}$, where $\omega_{0e}=2s_{0,e}$, $\omega_{1e}=1+2s_{1,e}$, where $s_{0,e},s_{0,e}\uar[0,\kappa)$. Thus $b_e= w_e \bmod\ 2$, and the noise has entropy $\lg\kappa$. The value of $\kappa$ is chosen as large as possible such that we can correctly evaluate any polynomial of degree 2 in two variables.  For each input $i\in I$, we represent the input bit value $b_i$ similarly, by $x_i\in\{\omega_{0i},\omega_{1i}\}$. The inputs and the wires in the circuit are encrypted using HE$k$, the inputs directly and the other wires indirectly as described below. As discussed in section \ref{sec:hek}, we need $k$ known plaintexts to break HE$k$. The plaintexts are the encrypted bits $w_e \bmod\ 2$, so a brute force attack requires guessing at least $2^k$ bits. So, by setting $k\gg \log \lambda$, a brute force attack on the system requires time superpolynomial in the security parameter $\lambda$.


An input $i\in I$ has a wire $(\Lambda,i)$ on which the (encrypted) input value $x_i$ is stored. For any wire $e=(i,v)$ from input $i$, we have a linear function $L(x)=a+bx$, which converts the plaintext input value $x\in\{\alpha_0,\alpha_1\}$ to the wire value $w\in\{\gamma_0,\gamma_1\}$. (We suppress the wire labels $e$ when they are clear from the context.) This requires
\[ a=(\alpha_1-\alpha_0)^{-1}(\alpha_1\gamma_0-\alpha_0\gamma_1),\quad b=(\alpha_1-\alpha_0)^{-1}(\gamma_1-\gamma_0).\]
The encrypted coefficients of this function are stored as data for the wire $e$. 
Note that all computations are $\bmod\ pq$, and the required inverses exist because the numbers involved are less than $\kappa$.


For each output wire $e=(v,v')$ of a NAND gate $v$, we have a quadratic function $Q(x,y)=a+bx+cy+dxy$, which converts the values on the input wires of the gate, $x\in\{\alpha_0,\alpha_1\}$, $y\in\{\beta_0,\beta_1\}$, to the wire value $w\in\{\gamma_0,\gamma_1\}$. This requires
\begin{align*}
a=\gamma_0+\alpha_1\beta_1\vartheta,\ \ b=-\beta_1\vartheta,\ \ c = -\alpha_1\vartheta,\ \ d = \vartheta,
\end{align*}
where $\vartheta=\big((\alpha_1-\alpha_0)(\beta_1-\beta_0)\big)^{-1}(\gamma_1-\gamma_0)$. Again, the encrypted coefficients of this function are stored as data for the wire $e$. 

For each output gate $v\in O$, we decrypt the value $w\in \{\gamma_0,\gamma_1\}$ computed by its (unique) output wire $(v,\Lambda)$. Then the output bit is $w \bmod\, 2$.

Thus we replace the logical operations of the Boolean circuit by evaluation of low degree polynomials. For simplicity, we have chosen to use only NAND gates, but we can represent any binary Boolean function by a quadratic polynomial in the way described above. Since the quadratic polynomials are encrypted in our system, they conceal the binary Boolean function they represent. Thus the circuit can be ``garbled''~\cite{bellare2012yao,goldreich1987play}, to minimise inference about the inputs and outputs of the circuit from its structure.

However, there is a price to be paid for controlling the noise. The encrypted circuit is not securely reusable with the same values $\omega_{0e},\omega_{1e}$ for $w_e$. Suppose we can observe the encrypted value on wire $e$ three times giving cyphertexts $\mbf{c}_1,\mbf{c}_2,\mbf{c}_3$. Two of these are encryptions of the same value $2s_{0,e}$ or $1+2s_{1,e}$. Thus $(\mbf{c}_1-\mbf{c}_2)\cdot(\mbf{c}_1-\mbf{c}_3)\cdot(\mbf{c}_2-\mbf{c}_3)$ is an encryption of $0$.
By doing this for $k$ wires, we can break the system. This is essentially the collision attack described in section~\ref{sec:inithom}.

Some reuse of the encrypted circuit is possible by using multiple values on the wires, and higher degree polynomials for the gates. However, we will not consider this refinement, since the idea seems to have little practical interest.

\section{Experimental Results}
\label{sec:results}

\begin{table*}[!thp]
\centering
\begin{adjustbox}{width=\textwidth,totalheight=8in,keepaspectratio}
\begin{tabular}{llllllllll}
\toprule
Alg.&  \multicolumn{3}{c}{Parameters} &  \multicolumn{2}{c}{Encryption} &  \multicolumn{3}{c}{MR Job} & Decrypt(ms) \\
 & $d$ & $\rho$ & $\rho'$ & Init(s) & Enc($\mu$s) & Exec(s) & Prod($\mu$s) & Sum($\mu$s) &  \\
\midrule
HE1       & 2 & 32  & n/a      & 0.12              & 13.52              & 23.82               & 54.41              & 9.06           & 0.21            \\
HE1       & 2 & 64  & n/a      & 0.12              & 16.24              & 23.85               & 60.38              & 8.04           & 0.49            \\
HE1       & 2 & 128 & n/a      & 0.15              & 25.73              & 23.77               & 84.69              & 8.43           & 0.28            \\
HE1       & 3 & 32  & n/a      & 0.17              & 22.98              & 23.65               & 87.75              & 11.46          & 0.35            \\
HE1       & 3 & 64  & n/a      & 0.19              & 34.63              & 24.72               & 95.68              & 12.37          & 0.45            \\
HE1       & 3 & 128 & n/a      & 0.42              & 54.83              & 26.05               & 196.71             & 14.07          & 0.55            \\
HE1       & 4 & 32  & n/a      & 0.28              & 43.36              & 24.48               & 108.72             & 13.75          & 0.5             \\
HE1       & 4 & 64  & n/a      & 0.53              & 58.85              & 26.41               & 227.44             & 15.85          & 3.59            \\
HE1       & 4 & 128 & n/a      & 1.36              & 104.95             & 28.33               & 484.95             & 16.92          & 5.67            \\
HE1N      & 2 & 1   & 32       & 0.22              & 32.99              & 22.94               & 88.38              & 8.53           & 3.35            \\
HE1N      & 2 & 1   & 64       & 0.39              & 52.63              & 26.24               & 168.54             & 12.39          & 3.56            \\
HE1N      & 2 & 1   & 128      & 1.2               & 89.01              & 26.18               & 226.2              & 13.16          & 8.1             \\
HE1N      & 2 & 8   & 32       & 0.6               & 57.88              & 25.9                & 177.36             & 11.17          & 7.18            \\
HE1N      & 2 & 8   & 64       & 0.32              & 43.93              & 26.53               & 96.78              & 12.18          & 2.27            \\
HE1N      & 2 & 8   & 128      & 1.13              & 78.11              & 24.42               & 212.75             & 11.07          & 8.4             \\
HE1N      & 2 & 16  & 64       & 0.33              & 53.97              & 27.15               & 168                & 13.67          & 4.47            \\
HE1N      & 2 & 16  & 128      & 0.63              & 68.73              & 25.22               & 194.42             & 11.01          & 7.65            \\
HE1N      & 3 & 1   & 32       & 8.54              & 183.19             & 24.24               & 522.07             & 12.06          & 9.09            \\
HE1N      & 3 & 1   & 64       & 3.67              & 125                & 29.49               & 467.36             & 18.22          & 11.43           \\
HE1N      & 3 & 1   & 128      & 27.84             & 313.76             & 26.94               & 1235.77            & 15.04          & 11.75           \\
HE1N      & 3 & 8   & 32       & 115               & 462.45             & 32.61               & 1556.17            & 21.11          & 19.79           \\
HE1N      & 3 & 8   & 64       & 9.75              & 180.08             & 25.87               & 500.62             & 15.03          & 10.39           \\
HE1N      & 3 & 8   & 128      & 36.05             & 259.15             & 30.1                & 836.27             & 20.68          & 11.45           \\
HE1N      & 3 & 16  & 64       & 30.96             & 378.99             & 28.24               & 1338.33            & 15.51          & 13.3            \\
HE1N      & 3 & 16  & 128      & 8.13              & 226.32             & 27.92               & 621.95             & 18.01          & 10.89           \\
HE2       & 2 & 32  & n/a      & 0.16              & 85.79              & 26.82               & 305.52             & 11.68          & 4.83            \\
HE2       & 2 & 64  & n/a      & 0.17              & 95.92              & 29.71               & 354.79             & 16.9           & 3.26            \\
HE2       & 2 & 128 & n/a      & 0.22              & 132.53             & 32.84               & 540.78             & 22.83          & 4.92            \\
HE2       & 3 & 32  & n/a      & 0.23              & 130.3              & 31.18               & 513.93             & 23.77          & 6.52            \\
HE2       & 3 & 64  & n/a      & 0.29              & 145.62             & 32.84               & 615.9              & 24.61          & 6.3             \\
HE2       & 3 & 128 & n/a      & 0.52              & 249.47             & 29.54               & 1443.82            & 16.56          & 18.34           \\
HE2       & 4 & 32  & n/a      & 0.39              & 175.63             & 29.5                & 733.23             & 20.69          & 6.01            \\
HE2       & 4 & 64  & n/a      & 0.7               & 255.3              & 29.55               & 1578.39            & 18.29          & 16.24           \\
HE2       & 4 & 128 & n/a      & 2.7               & 465.51             & 37.47               & 2943.91            & 22.15          & 15.41           \\
HE2N      & 2 & 1   & 32       & 0.27              & 147.83             & 29.74               & 571.94             & 16.58          & 5.66            \\
HE2N      & 2 & 1   & 64       & 0.43              & 202.74             & 33.36               & 1291.68            & 18.3           & 13.23           \\
HE2N      & 2 & 1   & 128      & 1.58              & 354.19             & 33.76               & 1977.51            & 17.13          & 12.46           \\
HE2N      & 2 & 8   & 32       & 0.59              & 234.83             & 31.42               & 1413.31            & 15.21          & 14.92           \\
HE2N      & 2 & 8   & 64       & 0.33              & 163.78             & 27.42               & 635.64             & 13.6           & 6.18            \\
HE2N      & 2 & 8   & 128      & 0.9               & 307.68             & 36.32               & 1850.83            & 21.71          & 15.79           \\
HE2N      & 2 & 16  & 64       & 0.42              & 208.1              & 29.96               & 1230.56            & 13.41          & 13.16           \\
HE2N      & 2 & 16  & 128      & 0.73              & 274.48             & 30.82               & 1585.1             & 14.85          & 15.04           \\
HE2N      & 3 & 1   & 32       & 5.72              & 651.1              & 36.49               & 3438.96            & 18.67          & 19.05           \\
HE2N      & 3 & 1   & 64       & 4.45              & 477.52             & 35.33               & 3073.46            & 18.75          & 19.77           \\
HE2N      & 3 & 1   & 128      & 26.83             & 1192.79            & 43.23               & 6416.43            & 22.48          & 25.12           \\
HE2N      & 3 & 8   & 32       & 87.38             & 1658.36            & 49.63               & 8139.19            & 23.71          & 27.24           \\
HE2N      & 3 & 8   & 64       & 5.21              & 607.75             & 36.54               & 3337.1             & 22.28          & 17.39           \\
HE2N      & 3 & 8   & 128      & 17.14             & 945.64             & 40.49               & 4620.69            & 25.91          & 22.41           \\
HE2N      & 3 & 16  & 64       & 39.19             & 1368.18            & 44.88               & 7005.7             & 24.1           & 28.3            \\
HE2N      & 3 & 16  & 128      & 11.39             & 774.07             & 36.05               & 3845.1             & 20.29          & 20.74 \\
\bottomrule
\end{tabular}
\end{adjustbox}
\caption{Timings for each experimental configuration. \emph{Init} is the initialisation time for the encryption algorithm, \emph{Enc} is the mean time to encrypt a single integer, \emph{Exec} is the MR job execution time, \emph{Prod} is the mean time to homomorphically compute the product of two encrypted integers, \emph{Sum} is the mean time to homomorphically compute the sum of two encrypted integers.}
\label{results}
\end{table*}

HE1, HE1N, HE2, and HE2N have been implemented in pure unoptimised Java using the JScience mathematics library \cite{jscience2014}. Secure pseudo-random numbers are generated using the ISAAC algorithm \cite{isaac2008}. The ISAAC cipher is seeded using the Linux {\small\texttt{/dev/random}} source. This prevents the weakness in ISAAC shown by Aumasson~\cite{aumasson2006isaac}.

We devised a simple evaluation experiment to generate a fixed (24,000) number of encrypted inputs and then perform a homomorphic inner product on those inputs using a Hadoop MapReduce (MR) algorithm. On the secure client side, the MR input is generated as pseudo-random $\rho$-bit integers which are encrypted and written to a file with $d$ inputs per line, where $d$ is the degree of the polynomial to be computed. In addition, the unencrypted result of the computation is computed so that it may checked against the decrypted result of the homomorphic computation. On the Hadoop cluster side, each mapper processes a line of input by homomorphically multiplying together each input on a line and outputs this product. A single reducer homomorphically sums these products. The MR algorithm divides the input file so that each mapper receives an equal number of lines of input, thereby ensuring maximum parallelisation. Finally, on the secure client side, the MR output is decrypted.

Our test environment consisted of a single secure client (an Ubuntu Linux VM with 16GB RAM) and a Hadoop 2.7.3 cluster running in a heterogeneous OpenNebula cloud. The Hadoop cluster consisted of 17 Linux VMs, one master and 16 slaves, each allocated 2GB of RAM. Each experimental configuration of algorithm, polynomial degree ($d$), integer size ($\rho$), and effective entropy of inputs after adding ``noise'' ($\rho'$, for the `N' variant algorithms only), was executed 10 times. The mean results are tabulated in Table \ref{results}.

Our results compare extremely favourably with Table 2 of \cite{naehrig2011can}. For encryption, our results are, in the best case, 1000 times faster than those presented there, and, in the worst case, 10 times faster. For decryption, our results are comparable. However, to decrypt our results we take the modulus modulo a large primes rather than 2 as in the case of \cite{naehrig2011can}. This is obviously less efficient. For homomorphic sums and products, our algorithms perform approximately 100 times faster. \cite{naehrig2011can} only provides experimental data for computing degree 2 polynomials. We have provided experimental results for the computation of higher degree polynomials.

Similarly, compared with figure 13 of \cite{popa2011cryptdb}, our encryption times for a 32-bit integer are considerably faster. While a time for computing a homomorphic sum on a column is given in figure 12, it is unclear how many rows exist in their test database. Nevertheless, our results for computing homomorphic sums compare favourably with those given. It should be noted that CryptDB \cite{popa2011cryptdb} only supports homomorphic sums and is incapable of computing an inner product. Therefore, we only compare the homomorphic sum timings.

Table 1 of \cite{stephen2014practical} is unclear on whether the values are aggregate timings or the timing per operation. Even assuming that they are aggregate values, our results are approximately 100 times faster than those presented for homomorphic sum and product operations. We also note that Crypsis \cite{stephen2014practical} uses two different encryption schemes for integers, ElGamal \cite{elgamal1985} and Paillier \cite{paillier1999}, which only support addition or multiplication but not both. No discussion of computation of an inner product is made in \cite{stephen2014practical} but we expect that the timings would be considerably worse as data encrypted using ElGamal to compute the products would have to be shipped back to the secure client to be re-encrypted using Paillier so that the final inner product could be computed.

We note that there are some apparent anomalies in the data. JScience implements arbitrary precision integers as an array of Java \texttt{long} (64-bit) integers to store the bit representation of an integer. It may be the case that this underlying representation is optimal for some of our test configurations and suboptimal for others, causing unexpected results. Another possibility is that the unexpected results may be as a result of JVM heap increases and garbage collection which may have been more prevalent in certain test configurations.

\section{Conclusion}
\label{sec:concfurther}

In this paper we have presented several new homomorphic encryption schemes intended for use in a practical SSCC system. We envisage that the majority of computation on integer big data, outside of scientific computing, will be computing low degree polynomials on integers, or fixed-point decimals which can be converted to integers. Our somewhat homomorphic schemes are perfectly suited to these types of computation.

As they are only somewhat homomorphic, each of these schemes has a concern that the computational result will grow bigger than the secret modulus. In the case of the ``noise'' variants, we also have to consider the noise term growing large. So, as they stand, these schemes can only compute polynomials of suitably bounded degree.

A further concern is that the ciphertext space is much larger than the plaintext space. This is as a result of adding multiples of large primes to the plaintext. However, we have shown that values exist which would make the system practical for computing low degree polynomials. Similar schemes \cite{vandijk2010fully,coron2011fully} produce ciphertexts infeasibly larger than the corresponding plaintext, which is a single bit. For example, it should be noted, that even the practical CryptDB \cite{popa2011cryptdb}, which is not fully homomorphic, enciphers a 32-bit integer as a 2048-bit ciphertext. Our schemes will produce ciphertext of similar size, if high security is required. However, if the security is only intended to prevent casual snooping, rather than a determined cryptographic attack, the ciphertext size can be reduced, and the blow-up may be acceptable. Observe that the parameters of the system will change for each computation, so a sustained attack has constantly to re-learn these parameters. Of course, if the attacker is able to export data for off-line cryptanalysis, only high security suffices.

We have also presented a hierarchy of systems, HE$k$, with increasing levels of security. These seem to be of practical interest for small $k>2$, but seem impractical for large $k$.

Finally, we presented a fully homomorphic scheme based on HE$k$ for large enough $k$, which seems of purely theoretical interest. The scheme is capable of computing an arbitrary depth Boolean circuit without employing the techniques  used in other fully homomorphic systems \cite{gentry2009fully,brakerski2012leveled,brakerski2012fully}.

We have implemented and evaluated the HE1, HE1N, HE2 and HE2N schemes as part of an SSCC system as discussed in section \ref{sec:results}. Our results are extremely favourable when compared with \cite{naehrig2011can,popa2011cryptdb,stephen2014practical}. So much so, that our MapReduce job execution times remain low even when using the largest set of parameters for HE2N. We believe that this demonstrates the suitability of our schemes for the encryption of integers in cloud computations.

\printbibliography

\clearpage
\appendix

\section{Proofs}
\factorise*
\begin{proof}
Suppose that we have an unknown plaintext $m$, encrypted as $c = m + r p \mod{pq}$, where $r\uar [1,q)$.

If we can factor $pq$ in polynomial time, we can determine $p$ and $q$ in polynomial time, since we know $p<q$. Therefore, we can determine~$m=c\bmod p$.

If we can determine $m$ given $c$ for arbitrary $m$, then we can determine $rp=c-m$. We are given $qp$, and we know
$0< r < q$, so $\gcd(rp,qp)$ must be $p$, and we can compute $p$ in polynomial time. Now, given $p$, we can determine $q$ as $qp/p$. Hence, we can factorise $pq$ in polynomial time.
\end{proof}

\collision*
\begin{proof}
$\Pr(m_1=m_2)=\sum_{i=0}^{M-1} \xi_i^2= 2^{-H_2}\leq 2^{-\rho}$, since $H_2\geq H_\infty=\rho$.
\end{proof}

\ind*\vspace{0ex}

\begin{proof}\ \vspace{-2ex}
\begin{align*}
	c = m + s\kappa + r p = m + r p \mod{\kappa},
\end{align*}
where $r\uar [1,q)$. Thus, for $i\in[0,\kappa)$,
\begin{align*}
	\Pr\big( c\bmod\kappa=i) &= \Pr(m + rp =i\!\!\mod{\kappa}\big)\\
& = \Pr\big(r= p^{-1}(i-m)\!\!\mod{\kappa}\big)\\
& \in \big\{\lfloor q/\kappa\rfloor 1/q, \lceil q/\kappa \rceil 1/q\big\}\\
& \in [1/\kappa-1/q,1/\kappa+1/q],
\end{align*}
where the inverse $p^{-1}$ of $p$ mod $\kappa$ exists since $p$ is a prime.
Hence the total variation distance from the uniform distribution is
\[ \tfrac12\sum_{i=0}^{\kappa-1} |\Pr\big( c\bmod\kappa=i)-1/\kappa| < \kappa/q .\]
This is exponentially small in the security parameter $\lambda$ of the system, so the distribution of $c\bmod\kappa$ cannot be distinguished in polynomial time from the uniform distribution. Note further that $c_1\bmod\kappa$, $c_2\bmod\kappa$ are independent for any two ciphertexts $c_i=m_i + s_i\kappa + r_i p$ $(i=1,2)$, since  $r_1,r_2$ are independent.

To show IND-CPA, suppose now that known plaintexts $\mu_1,\ldots,\mu_n$ are encrypted by an oracle for HE1N, giving ciphertexts $c_1,\ldots,c_n$. Then, for $r_i\uar[0,q)$, $s_i\uar[0,\kappa)$, we have an SPACDP with ciphertexts $c_i = m_i + s_i\kappa + r_i p$, and the approximate divisor $p$ cannot be determined in polynomial time in the worst case. However, the offsets in this SPACDP are all of the form $\mu_i + s_i\kappa$, for known $m_i$, and we must make sure this does not provide information about $p$. To show this, we rewrite the SPACDP as
\begin{equation}\label{eq:indcpa}
	c_i = \mu_i + s_i\kappa + r_i p = \mu'_i + s'_i\kappa,\quad (i=1,2,\ldots,n),
\end{equation}
where $s'_i=s_i+\lfloor (m_i+r_ip)/\kappa\rfloor$, and $\mu'_i=\mu_i + r_ip\pmod\kappa$.  Now we may view \eqref{eq:indcpa} as an ACDP, with ``encryptions'' $\mu'_i$ of the $\mu_i$, and approximate divisor $\kappa$. Since ACDP is at least as hard as SPACDP, and the offsets $\mu'_i$ are polynomial time indistinguishable from uniform $[0,\kappa)$, from above, we will not be able to determine $\kappa$ in polynomial time.  Now, the offsets $m'_1,m'_2$ of any two plaintexts $m_1,m_2$ are polynomial time indistinguishable from $m'_2,m'_1$, since they are indistinguishable from two independent samples from uniform $[0,\kappa)$.  Therefore, in polynomial time, we will not be able to distinguish between the encryption $c_1$ of $m_1$ and the encryption $c_2$ of $m_2$.
\end{proof}

\hetworandom*
\begin{proof}
	Consider a ciphertext vector which encrypts the plaintext, $m$, and the expression $m+rp+sa\mod{pq}$ which represents one of its elements. Then $r \uar [0,q)$, $s \uar[0,pq)$.

	Consider first $m+sa$. We know that $a^{-1} \mod{pq}$ exists because $a \neq 0$ ($\bmod\ p$ and $\bmod\ q$). Thus, conditional on $r$,
	\begin{align*}
		\Pr[m+rp+sa=i\bmod\ pq] &=\\ \Pr[s = a^{-1}(i-m-rp)&\bmod\ pq]\,=\,\frac{1}{pq}.
	\end{align*}
	Since this holds for any $i\in[0,pq)$, $m+ra+sp \mod{pq}$ is a uniformly random integer from $[0,pq)$.
\end{proof}

\hardness*
\begin{proof}
Suppose we have a system of $n$ approximate prime multiples, $m_i+r_ip$ ($i=1,2,\ldots,n$).  Then
we generate values $a,s_1,s_2,\ldots,s_n\uar[0,pq)$, and we have an oracle set up the cryptosystem with $a_1=a$, $a_2=a+\delta$. The oracle has access to $p$ and provides us with $R$, but no information about its choice of $\varrho$ and $\sigma$. We then generate the ciphertexts $\mbf{c}_i$ $(i=1,2,\ldots,n)$:
\begin{align}
\begin{bmatrix}
 	c_{i1} \\ c_{i2}\end{bmatrix}=\begin{bmatrix}
 	m_i+r_ip+s_ia \\ m_i+r_ip+s_i(a+\delta )\end{bmatrix}\pmod{pq}.\label{eq:170}
\end{align}
Thus $c_{i1}-s_ia=c_{i2}-s_i(a+\delta )=m_i+r_ip$. Thus
finding the $m_i$ in \eqref{eq:170} in polynomial time solves SPACDP in polynomial time.

Conversely, suppose we have any HE2 system with $a_2=a_1+\delta $. The ciphertext for $m_i$ ($i=1,2,\ldots,n$) is
as in \eqref{eq:170}.
so $s_i=\delta ^{-1}(c_{i2}-c_{i1})$. Since $0<\delta <q<p$, $\delta $ is coprime to both $p$ and $q$, and hence $\delta ^{-1}\mod{pq}$ exists. Thus breaking the system is equivalent to determining the $m_i \mod p$ from $m_i+\delta ^{-1}(c_{i2}-c_{i1})a+r_ip$ ($i=1,2,\ldots,n$). Determining the $m_i+\delta ^{-1}(c_{i2}-c_{i1})a$ from the $m_i+\delta ^{-1}(c_{i2}-c_{i1})a+r_ip$ ($i=1,2,\ldots,n$) can be done using SPACDP. However, we still need to determine $a$ in order to to determine  $m_i$. This can be done by ``deciphering'' $R$ using SPACDP. We have
\[ 2\delta ^2\alpha_1= \sigma a-a^2+\varrho p,\qquad 2\delta ^2\alpha_2= \sigma (a+\delta )-(a+\delta )^2+\varrho p,\]
so $\sigma=2\delta ^2(\alpha_2-\alpha_1)-2ka-\delta ^2$. Now $a$ can be determined by first determining $m_0=a(2\delta ^2(\alpha_2-\alpha_1)-(2\delta +1)a-\delta ^2)$ from $m_0+\varrho p=2\delta ^2\alpha_1$. This can be done using SPACDP. Then $a$ can be determined  by solving the quadratic equation $m_0=a(2\delta ^2(\alpha_2-\alpha_1)-(2\delta +1)a-\delta ^2) \mod p$ for $a$. This can be done probabilistically in polynomial time using, for example, the algorithm of Berlekamp \cite{berlekamp1970}.
So the case $\mbf a=[a\ \,a+\delta ]^T$, with known $\delta $, can be attacked using SPACDP on the system
\begin{align*}
 	m_0&+\varrho p,\ m_1+\delta ^{-1}(c_{11}-c_{12})a+r_1p,\\&\ldots,\ m_n+\delta ^{-1}(c_{n1}-c_{n2})a+r_np.\qed
\end{align*}
\end{proof}

\lemten*
\begin{proof}
The $\mbf{a}$'s are a basis if $A_k^{-1}$ exists, since then $\mbf{v}=A_k\mbf{r}$ when $\mbf{r}=A_k^{-1}\mbf{v}$, for any $\mbf{v}$. Now $A_k^{-1}$ exists $\bmod\ {pq}$ if $(\det A_k)^{-1} \mod{pq}$ exists, by constructing the adjugate of $A_k$. Now $(\det A_k)^{-1} \mod{pq}$ exists if $\det A_k\neq 0 \mod{p}$ and $\det A_k\neq 0\mod{q}$.
Now $\det A_k$ is a polynomial of total degree $(k-1)$ in the $a_{ij}$ ($0<i\leq k,0<j<k$), and is not identically zero, since $\det A_k=1$ if $\mbf{a}_i=\mbf{e}_{i+1}$ ($1<i<k$). Also $a_{ij}\uar [0,pq)$ implies $a_{ij}\bmod p\uar [0,p)$ and $a_{ij}\bmod q\uar [0,q)$. Hence, using the Schwartz-Zippel Lemma (SZL)~\cite{moshkovitz2010szl}, we have
$\Pr(\det A_k = 0\bmod{p})\leq (k-1)/p$ and $\Pr(\det A_k = 0\bmod{q})\leq (k-1)/q$, and it follows that
$\Pr(\nexists\,(\det A_k)^{-1}\bmod{pq})\ \leq\ (k-1)(1/p+1/q)$.\qed
\end{proof}

\lemtwenty*
\begin{proof}
We have $\mbf{v}=A_k\mbf{r}$ for some $\mbf{r}\in\mathbb{Z}^k_{pq}$. Then $A_{\star k}=U_kA_k$ and $\mbf{v}_{\star k}=U_k\mbf{v}$, so $R\mbf{v}_\star=RU_k\mbf{v}=RU_kA_k\mbf{r}=RA_{\star k}\mbf{r}=A_{k}\mbf{r}=\mbf{v}$.\qed
\end{proof}

\thmten*
\begin{proof}
We use the same approach as in Lemma~\ref{lem:10}. Thus $A^{\circ2}_{\star k}$ is invertible provided $\det A^{\circ2}_{\star k}\neq 0\mod{p}$ and $\det A^{\circ2}_{\star k}\neq 0\mod{q}$. Let $\mbf{A}$ denote the vector of $a_{ij}$'s, $(a_{ij}:1\leq i\leq k, 1\leq j<k)$. The elements of $A^{\circ2}_{\star k}$ are quadratic polynomials over $\mbf{A}$, except for the first column, which has all 1's, and columns $2,3,\ldots, k$ which are linear polynomials. So $\det A^{\circ2}_{\star k}$ is a polynomial over $\mbf{A}$ of total degree $2\binom{k}{2}+k-1=k^2-1$. Thus, unless $\det A^{\circ2}_{\star k}$ is identically zero as a polynomial over $\mbf{A}$, the SZL~\cite{moshkovitz2010szl} implies
$\Pr(\nexists\,(\det A^{\circ2}_{\star k})^{-1}\bmod{p})\leq(k^2-1)/p$ and $\Pr(\nexists\,(\det A^{\circ2}_{\star k})^{-1}\bmod{q})\leq(k^2-1)/q$.
Therefore we have $\Pr(\nexists\,(\det A^{\circ2}_{\star k})^{-1}\bmod{pq})\leq(k^2-1)(1/p+1/q)$.

It remains to prove that $\det A^{\circ2}_{\star k}$ is not identically zero as a polynomial over $\mbf{A}$ in either $\mathbb{Z}_p$ or $\mathbb{Z}_q$. We prove this by induction on $k$.  Consider $\mathbb{Z}_p$, the argument for $\mathbb{Z}_q$ being identical. Since $\mathbb{Z}_p$ is a field, $\det A^{\circ2}_{\star k}$ is identically zero if and only if it has rank less than $\binom{k+1}{2}$ for all $\mbf{A}$. That is, there exist $\lambda_{ij}(\mbf{A})\in\mathbb{Z}_p$ ($0\leq i\leq j<k$), not all zero, so that
\begin{align*}
\mcl{L}(\mbf{A})\,&=\,\sum_{0\leq i\leq j}^{k-1} \lambda_{ij}\mbf{a}_{\star i}\circ\mbf{a}_{\star j}\\
&\,=\mbf{\alpha}+ \mbf{a}_{\star ,k-1}\circ\mbf{\beta}+ \lambda_{k-1,k-1}\mbf{a}^{\circ2}_{\star ,k-1}\
\,=\, 0,
\end{align*}
where $\mbf{\alpha}=\sum_{0\leq i\leq j}^{k-2} \lambda_{ij}\mbf{a}_{\star i}\circ\mbf{a}_{\star j}$ and
$\mbf{\beta}=\sum_{i=0}^{k-2}\lambda_{i,k-1}\mbf{a}_{\star i}$ are independent of $\mbf{a}_{\star ,k-1}$.

Clearly $\lambda_{k-1,k-1}=0$. Otherwise, whatever $\mbf{\alpha},\mbf{\beta}$, we can choose values for $\mbf{a}_k$ so that  $\mcl{L}\neq0$, a contradiction. Now suppose $\lambda_{i,k-1}\neq 0$ for some $0\leq i<k-1$. The matrix $\hat{A}_\star$ with columns $\mbf{a}_{\star i}$ ($0\leq i<k-1$) contains $A_{k-1}$ as a submatrix, which has rank $(k-1)$ with high probability by Lemma~\ref{lem:10}. Thus $\beta\neq\mbf{0}$ and, whatever $\mbf{\alpha}$, we can choose values for $\mbf{a}_k$ so that  $\mcl{L}\neq0$. Thus $\lambda_{i,k-1}= 0$ for all $0\leq i<k$. Thus $\lambda_{ij}\neq 0$ for some $0\leq i\leq j<k-1$. Now the matrix $\hat{A}_\star^{\circ 2}$ with $\binom{k}{2}$ columns $\mbf{a}_{\star i}\circ\mbf{a}_{\star j}$ $(0\leq i\leq j<k-1)$ contains $A^{\circ 2}_{\star ,k-1}$ as a submatrix, and therefore has rank $\binom{k}{2}$ by induction. Hence $\alpha\neq \mbf{0}$, implying $\mcl{L}\neq0$, a contradiction.
\end{proof}

\lemthirty*
\begin{proof}
From~Lemma~\ref{lem:10}, $\det A = 0$ $\mod{p}$ or $\det A= 0$ $\mod{q}$ with probability at most $(k-1)(1/p+1/q)$. So $\det A$ is not zero or a divisor of zero $\bmod\,pq$.
The entries of $W'$ are random $[0,pq)$, and $\det W'$ is a polynomial of total degree $k$ in its entries. It is a nonzero polynomial, since $W'=I_k$ is possible. Hence, using the SZL~\cite{moshkovitz2010szl}, $\Pr(\det W'=0\bmod p)\leq k/p$ and $\Pr(\det W'=0\bmod q)\leq k/q$. So $\det W'$ is zero or a divisor of zero $\bmod\,pq$ with probability at most $k(1/p+1/q)$. So $\det A\det W'=0$ $\bmod\,pq$ with probability at most $(2k-1)(1/p+1/q)$. So $\det C\neq 0$ with high probability.
\end{proof}

\lemthirtyfive*
\begin{proof}
Note that $C_0=C$ if $m_1=m_2=\cdots=m_k=0$. Since Lemma~\ref{lem:30} holds in that case, the result follows.
\end{proof}

\lemoneforty*
\begin{proof}
Since $\mbf{\gamma}^T\mbf{1}=1$ and $\mbf{\gamma}^T\mbf{a}_i=0$, $i\in[1,k-1]$, $\mbf{\gamma}^T(\mbf{a}_i\cdot\mbf{a}_j)=\mbf{\gamma}^T\big(p\varrho_{ij}\thsp\mbf{1}+\sum_{l=1}^{k-1}\sigma_{ijl}\thsp\mbf{a}_l\big)
=p\varrho_{ij}$. Thus
\begin{align*}
\mbf{a}_i\cdot(\mbf{a}_j\cdot\mbf{a}_l)\ &=\  \mbf{a}_i\cdot\big(p\varrho_{jl}\mbf{1}+\ts\sum_{t=1}^{k-1}\sigma_{jlt}\mbf{a}_t\big) \\
&=\  p\varrho_{jl}\mbf{a}_i+\ts\sum_{t=1}^{k-1}\sigma_{jlt}\mbf{a}_i\cdot\mbf{a}_t,
\end{align*}
and hence
$\mbf{\gamma}^T[\mbf{a}_i\cdot(\mbf{a}_j\cdot\mbf{a}_l)] = p\ts\sum_{t=1}^{k-1}\sigma_{jlt}\varrho_{it}$.
Similarly
$\mbf{\gamma}^T[(\mbf{a}_i\cdot\mbf{a}_j)\cdot\mbf{a}_l] = p\ts\sum_{t=1}^{k-1}\sigma_{ijt}\varrho_{lt}$,
and the lemma follows.
\end{proof}

\lemonefifty*
\begin{proof}
We have $\mbf{\gamma}^T(\mbf{a}_j\cdot\mbf{a}_l)=p\varrho_{ij}=p\varrho_j\varrho_l$ for all $j,\ell\in[1,k-1]$. Hence, $\bmod\ q$,
\begin{align*}
\mbf{\gamma}^T(\mbf{a}_i\cdot(\mbf{a}_j\cdot\mbf{a}_l))\ &=\ p\ts\sum_{t=1}^{k-1}\sigma_{jlt}\varrho_{it}\\
\ &=\ \ p\ts\sum_{t=1}^{k-1}\sigma_{jlt}\varrho_i\varrho_t\\
\ &=\ \ p\varrho_i\ts\sum_{t=1}^{k-1}\sigma_{jlt}\varrho_t\\
\ & =\ p\varrho_i\tau\varrho_j\varrho_l\ =\ p\tau\varrho_i\varrho_j\varrho_l.\qed
\end{align*}
\end{proof}

\section{Derivation of bounds}
\label{app:bounds}
To recap, $n$ is the number of inputs, $M$ is an exclusive upper bound on the inputs, $d$ is the degree of the polynomial we wish to calculate. We take $p \approx 2^\lambda$ and then $q \approx 2^\eta$, where $\eta = \lambda^2/\rho -\lambda$, to guard against the attacks of~\cite{cohn2012approx,howgrave2001approx}.

For HE1, we assume $M\approx 2^\rho$, $n \leq \sqrt{M}$. Therefore,
\[p>(n+1)^dM^d \approx (nM)^d\ \textrm{for\ large}\ n.\]
So, we may take
\begin{align*}
p=2^\lambda &> M^{3d/2}\approx2^{3d\rho/2}\\
\textrm{i.e.}\ \lambda &\approx 3d\rho/2\\
\textrm{and}\ \eta &\approx \frac{\lambda^2}{\rho} -\lambda = \frac{3d\lambda}{2} - \lambda = \frac{3d\rho}{2}\left(\frac{3d}{2} - 1\right)
\end{align*}

For HE1N, we assume $M \approx 2^\rho$, and we have $\rho' = \rho + \lg{\kappa}$. Now,
\begin{align*}
\kappa & >(n+1)^dM^d \approx (nM)^d\ \textrm{for\ large}\ n,\\
\textrm{i.e.}\ \lg{\kappa} &\approx d (\lg{n} + \rho)
\end{align*}
Therefore, since $\rho=\rho'-\lg{\kappa}$,
\begin{align*}
\lg{\kappa} &> d \lg{n} + d(\rho'-\lg{\kappa})\\
\textrm{i.e.}\ \lg{\kappa} &\approx \frac{d(\lg{n} + \rho')}{d+1}
\end{align*}
Since $\kappa$ is much larger than $M$, we also have
\begin{align*}
p=2^\lambda  &> (n+1)^d(M+\kappa^2)^d \approx (n\kappa^2)^d \ \textrm{for\ large}\ n\\
\textrm{i.e.}\ \lambda &\approx d (\lg{n} + 2 \lg{\kappa}),\\
\textrm{and}\ \eta &\approx \frac{\lambda^2}{\rho'} -\lambda = \frac{3d\lambda}{2} - \lambda = \frac{3d\rho'}{2}\left(\frac{3d}{2} - 1\right)
\end{align*}
Then we can calculate $\eta$  as for HE1 above. Note that, in both HE1 and HE1N, $\lambda$ scales linearly with $d$,
and $\eta$ scales quadratically.
\end{document}